\theoremstyle{plain}
\newtheorem{lemma}{Lemma}
\def\BibTeX{{\rm B\kern-.05em{\sc i\kern-.025em b}\kern-.08em
    T\kern-.1667em\lower.7ex\hbox{E}\kern-.125emX}}
    \DeclareMathOperator*{\argmax}{arg\,max}
\begin{document}
\title{Multi-Static Target Detection and Power Allocation for Integrated Sensing and Communication in Cell-Free Massive MIMO
\thanks{Z. Behdad, K. W. Sung, E.~Bj\"ornson, and C. Cavdar are with the Department of Computer Science, KTH Royal Institute of Technology, Kista, Sweden (\{zinatb, sungkw, emilbjo, cavdar\}@kth.se). \"O. T. Demir is with the  Department of Electrical and Electronics Engineering, TOBB University of Economics and Technology, Ankara, Turkey (ozlemtugfedemir@etu.edu.tr).  Results incorporated in this paper received funding from the ECSEL Joint
Undertaking (JU) under grant agreement No 876124. The JU receives support
from the EU Horizon 2020 research and innovation programme and Vinnova
in Sweden. E.~Bj\"ornson was supported by the SUCCESS project from the Swedish Foundation for Strategic Research.}
\author{\IEEEauthorblockN{Zinat Behdad, \"Ozlem Tu\u{g}fe Demir, Ki Won Sung, Emil Bj\"ornson, and Cicek Cavdar} }}
\maketitle
\vspace{-10mm}
\begin{abstract}
This paper studies an integrated sensing and communication (ISAC) system within a centralized cell-free massive MIMO (multiple-input multiple-output) network for target detection. ISAC transmit access points serve the user equipments in the downlink and optionally steer a beam toward the target in a multi-static sensing framework. A maximum a posteriori ratio test detector is developed for target detection in the presence of clutter, so-called target-free signals. Additionally, sensing spectral efficiency (SE) is introduced as a key metric, capturing the impact of resource utilization in ISAC. A power allocation algorithm is proposed to maximize the sensing signal-to-interference-plus-noise ratio while ensuring minimum communication requirements. Two ISAC configurations are studied: utilizing existing communication beams for sensing and using additional sensing beams. The proposed algorithm's efficiency is investigated in realistic and idealistic scenarios, corresponding to the presence and absence of the target-free channels, respectively. Despite performance degradation in the presence of target-free channels, the proposed algorithm outperforms the interference-unaware benchmark, leveraging clutter statistics. Comparisons with a fully communication-centric algorithm reveal superior performance in both cluttered and clutter-free environments. The incorporation of an extra sensing beam enhances detection performance for lower radar cross-section variances. Moreover, the results demonstrate the effectiveness of the integrated operation of sensing and communication 
compared to an orthogonal resource-sharing approach.
\end{abstract}
\begin{IEEEkeywords}
Integrated sensing and communication (ISAC), cell-free massive MIMO, C-RAN, power allocation, multi-static sensing.
\end{IEEEkeywords}
\section{Introduction}
Radio-based positioning and sensing have gained considerable attention as one of the key enablers in beyond 5G and 6G wireless networks. It is envisioned that integrated sensing and communication (ISAC), also known as joint communication and sensing (JCAS), will become a promising technology in future wireless systems, providing many sensing applications \cite{liu2022integrated}\cite{behravan2022positioning}. It is expected to support various use cases, including remote healthcare, remote monitoring of weather conditions, asset tracking, gesture recognition, autonomous vehicles, and augmented reality (AR). The main idea of such a system is to share infrastructure, resources, and signals between communications and sensing to improve the spectral efficiency (SE) and sensing performance while lowering the hardware and deployment costs \cite{liu2021survey,zhang2021enabling,wild2021joint,behravan2022positioning,liu2022seventy,liu2020joint}. The communication performance could also be enhanced by exploiting the sensing information to estimate the channel and optimize the network steering \cite{mu2021integrated}.
ISAC benefits from current developments in wireless systems, i.e., denser access point (AP) deployment and wider bandwidths \cite{liu2021survey,zhang2021enabling,behravan2022positioning}.

From the network-level perspective, ISAC is expected to provide a ubiquitous radar perception system \cite{pritzker2022transmit,thoma2021joint, huang2022coordinated}, also known as a \emph{perceptive mobile network} \cite{zhang2020perceptive}.
The focus of the research on ISAC has been mainly on waveform design and signal processing (see  \cite{liu2020joint,buzzi2019using,hua2021transmit,pritzker2022transmit,ye2022beamforming,liu2022joint,he2022joint,li2022framework,li2022integrated,9909950,liu2020radar,mu2021integrated,bazzi2022outage}), where ISAC is mostly considered in single-cell scenarios with mono-static sensing, which refers to a co-located transmitter and receiver for sensing.
Bi-static sensing, involving a non-co-located transmitter and receiver, can be employed to dispense with the full-duplex capability requirement in mono-static sensing. Multi-static sensing, which uses multi non-co-located transmitters and receivers, can also be preferable since it provides a diversity gain due to multiple uncorrelated sensing observations at distributed sensing receivers. Furthermore, multi-static sensing offers an additional performance gain as it achieves increased joint transmit/receive beamforming gain by using multiple transmitters/receivers in the network. 

Synchronization is a challenging issue in practical deployment of multi-static sensing since unsynchronized transmitters and receivers lead to ambiguity and performance degradation. This issue has been also studied for the deployment of a cell-free network for communication purposes \cite{larsson2024massive}. Therefore, cell-free massive MIMO (cell-free massive multiple-input multiple-output) is a promising infrastructure \cite{cell-free-book} to implement multi-static sensing in an existing communication system, as we assume the synchronization issue has been already addressed for communication purposes in cell-free networks. This motivates our work. 
To fully realize the advantages of cell-free massive MIMO, it is preferred to use phase-coherent centralized joint processing. This involves implementing transmit/receive processing by utilizing all the channel estimates obtained from each AP and processing them in a central processing unit. The cloud radio access network (C-RAN) architecture facilitates centralization and synchronization among the distributed APs \cite{demir2023oran}. This architecture has recently gained interest in ISAC systems \cite{huang2022coordinated}.  

There is usually a trade-off between sensing and communication performance due to the shared resources (e.g., time, frequency, power, and space). This trade-off can be balanced by exploiting the existing communication signals for sensing purposes or integrating sensing signals by guaranteeing a specific communication performance. Despite the progress made in this area, there is still a need for new processing and resource allocation schemes that can effectively address both communication and sensing requirements. This becomes even more challenging in cell-free ISAC with joint transmission/reception.

This work studies the detection of a single target in a cell-free ISAC system in the presence of clutter and imperfect channel state information (CSI). The clutter refers to the undesired received signals such as reflections from non-target objects. In this paper, the clutter is characterized by the unknown components of the channels between the ISAC transmit and sensing receive APs. 
It can be treated as interference for sensing and will affect the performance of target detection. However, the communication symbols contribute to sensing by the reflected paths toward the target. Therefore, these reflections are not considered as interference. This is why we design the precoding vectors so that we null the interference only for user equipments (UEs). A power allocation algorithm is proposed to effectively maximize the sensing performance while ensuring that the communication performance remains satisfactory. The preliminary results for a simple channel model with perfect CSI for the communication and sensing channels, without considering clutter, have been published in \cite{behdad2022power}.   
\subsection{Related Work}
Table~\ref{tab:literature} presents an overview of some of the research conducted on ISAC.\footnote{The acronyms in the table are given as OFDM (orthogonal frequency-division multiplexing), BF (beamforming), BER (bit error rate), TX (transmitter), RX (receiver), UL (uplink), DL (downlink), DFRC (dual-functional radar and communications), AF (amplify-and-forward), SINR (signal-to-interference-and-noise ratio), SIR (signal-to-interference ratio), V2I (vehicle-to-infrastructure), EKF (extended Kalman filtering), RMSE (root-mean-square error), ESA (effective sensing area), MI (mutual information).} The table highlights that main focus has been on beamforming design and a few papers have investigated the benefits of power control,\footnote{The beamforming techniques employed in cell-free massive MIMO demonstrate near-optimality.  On the other hand, updating of precoding vectors is required for each coherence block and even for individual target locations. This necessitates algorithm execution for each update. However, a more resource-efficient alternative involves updating power coefficients over prolonged intervals, thereby optimizing computational efficiency.} with notable examples including \cite{buzzi2020transmit} and \cite{huang2022coordinated}. Specifically, \cite{buzzi2020transmit} proposes a power allocation strategy for a mono-static sensing scheme in a single-cell ISAC massive MIMO system.
In contrast, \cite{huang2022coordinated} utilizes distributed radar sensing to estimate the target's location by jointly processing the received signals in a central unit. It also proposes a coordinated power allocation approach where a set of distributed transmit APs individually serve their corresponding UEs in a cellular-type network.

\begin{table*}[t!]
\caption{A literature overview on ISAC.}
\centering
{\begin{tabular}{|c|c|c|c|c|c|}
\hline
\multirow{1.5}{*}{\textbf{Category}} & \multirow{1.5}{*}{\textbf{Ref.}} & \multirow{1.5}{*}{\textbf{System model} }& \multirow{1.5}{*}{\textbf{Contributions}} & \multirow{1.5}{*}{\textbf{Sensing metrics}} & \multirow{1.5}{*}{\textbf{Comm. metrics}}\\[1.5mm]
\hline \hline
\multirow{5}{*}{\textbf{Mono-static}} 
& \multirow{1.5}{*}{\cite{ye2022beamforming}} &  \multirow{1.5}{*}{OFDM-JSAC}  & \multirow{1.5}{*}{Low-complexity JSAC BF algorithms} &\multirow{1.5}{*}{ Beampattern} & \multirow{1.5}{*}{SE and BER}\\[1.5mm]
\cline{2-6} 
& \multirow{1.5}{*}{\cite{liu2022joint}} & \makecell{Single-cell narrow-band\\full-duplex MIMO} & Joint TX-RX BF design & \makecell{Beampattern,\\ velocity, range} & \multirow{1.5}{*}{UL/DL rates}\\[2mm]
\cline{2-6} 
& \multirow{1.5}{*}{\cite{he2022joint}} & \makecell{Cellular DFRC with\\full-duplex relay} & \makecell{Joint transceiver design (AF BF design)} & \makecell{SINR, beampattern,\\detection probability} & \makecell{Nulling radar\\interference} \\[2mm]
\cline{2-6} 
& \makecell{\cite{li2022framework}\\ \cite{li2022integrated}\\ \cite{bazzi2022outage}} & \makecell{Single-cell DFRC-MIMO} & \makecell{BF design under different types of \\ communication interference\cite{li2022framework,li2022integrated},\\
BF design under imperfect CSI\cite{bazzi2022outage}} & \makecell{MI\cite{li2022framework,li2022integrated}\\
Beampattern and\\detection probability\cite{bazzi2022outage}} & Rate\\[2mm]
\cline{2-6} 
& \multirow{1.5}{*}{\cite{guo2022performance}} & \makecell{Multi-antenna full-duplex} & \makecell{An alternative successive interference \\cancellation scheme} & \makecell{Estimation rate of distance,\\ direction, and velocity} & Rate\\[2mm]
\cline{2-6} 
& \multirow{1.5}{*}{\cite{buzzi2020transmit} }& \makecell{Single-cell ISAC \\  massive MIMO system} & \makecell{Power allocation to maximize the \\communication fairness and guarantee \\certain radar SIR} & Detection probability & Rate\\[2mm]
\cline{2-6} 
& \makecell{\cite{liu2020radar},\\ \cite{mu2021integrated}} & \makecell{DFRC V2I} & \makecell{Novel DFRC-based framework for vehicle \\ tracking \& power allocation \cite{liu2020radar},\\ predictive BF design\cite{liu2020radar,mu2021integrated}} & \makecell{RMSE for angle and \\ distance tracking\cite{liu2020radar}, \\
angle\cite{mu2021integrated}} & Sum rate\\[2mm]
\cline{2-6} 
& \makecell{\cite{ouyang2022performance},\\ \cite{ouyang2023integrated}} & \makecell{Single-cell ISAC MIMO} & \makecell{MI-based performance analysis} & Sensing rate & \makecell{Outage\\ probability\cite{ouyang2022performance},\\
rate\cite{ouyang2022performance,ouyang2023integrated}} \\[2mm]
\hline
\multirow{3}{*}{\textbf{Bi-static}} 
& \cite{9909950} & Cellular & \makecell{A new JSAC procedure and protocol} & \makecell{ESA,  maximum detectable\\ velocity, range, and velocity} & BER \\[2mm]
\cline{2-6}
& \makecell{\cite{yang2007mimo},\\ \cite{xie2023sensing}} & MIMO Radar & \makecell{Waveform/precoding design} & MI& -- \\
\hline
\multirow{1.5}{*}{\textbf{\shortstack{Distributed\\bi-static}}} 
& \multirow{1.5}{*}{\cite{sakhnini2022target}} & \makecell{Cell-free massive MIMO\\(fully centralized)} & \makecell{Protocol for UL communications and \\ distributed bi-static sensing} & \makecell{Detection  probability} & \makecell{SE} \\[2mm]
\hline
\multirow{5}{*}{\textbf{Multi-static}} 
& \cite{huang2022coordinated} & \makecell{Multi ISAC-TXs \& multi\\ (centralized) sensing-RXs} & \makecell{Power allocation to minimize the total \\TX power} & Location & SINR \\[2mm]
\cline{2-6} 
& \cite{9920954} & \makecell{V2I (with central controller)} & \makecell{Bandwidth allocation and base station \\bselection using RL} & Estimation rate & Rate \\[2mm]
\cline{2-6}
& \multirow{1.5}{*}{\cite{cheng2022coordinated}} & \multirow{1.5}{*}{Multi-antenna Network ISAC} & \multirow{1.5}{*}{Coordinated transmit BF} & \multirow{1.5}{*}{Detection probability} & \multirow{1.5}{*}{SINR} \\[2mm]
\cline{2-6} 
& \cite{demirhan2023cell} & \makecell{Cell-free ISAC MIMO\\(fully centralized)} & \makecell{ISAC BF design} & \makecell{Sensing SNR} & SINR \\[2mm]
\cline{2-6} 
& \makecell{This\\paper} & \makecell{Cell-free massive MIMO\\(fully centralized)} & \makecell{Power allocation to maximize detection \\ probability while meeting comm. SINR} & \makecell{Detection probability,\\ SE} & SINR, SE \\[1.5mm]
\hline
\end{tabular}}
\label{tab:literature}
\vspace{-5mm}
\end{table*}

The proposed power allocation approach aims at minimizing the total transmit power while ensuring a minimum SINR for each UE and meeting the required Cram\'er-Rao lower bound to estimate the location of the target.

To the best of our knowledge, there is limited research on ISAC in cell-free massive MIMO systems. A recent study in \cite{sakhnini2022target} provides valuable insights into target detection analysis and the number of required antennas to achieve a certain detection probability. In contrast to our work, this study considers uplink communication and a distributed bi-static sensing scheme with a single transmit AP. It proposes a protocol using two sensing modes: i) uplink channel estimation and ii) data payload segment of the communication frame. However, the study does not incorporate any power control mechanism.  A joint sensing and communication BF design is proposed for a cell-free MIMO system in \cite{demirhan2023cell}. The proposed BF design is based on a max-min fairness formulation and is compared with two baseline approaches: i) communication-prioritized sensing BF and ii) sensing-prioritized communication BF. The results in \cite{demirhan2023cell} show that the joint design can achieve the communication SINR close to the SINR of the communication-prioritized approach and almost the same sensing SNR achieved by the sensing-prioritized approach. In this work, the authors assume perfect CSI and do not consider the derivation of the target detector. 

The works in \cite{sakhnini2022target} and \cite{demirhan2023cell} neglect the clutter's influence. However, clutter is unavoidable in practical scenarios and acts as interference for sensing which reduces the SINR and degrades the accuracy of target detection. 
One way to deal with clutter is to improve the signal-to-clutter ratio (SCR). Several techniques have been proposed such as: Doppler processing, beamforming, and clutter suppression using space-time adaptive processing (STAP)\cite{melvinprinciples}. However, clutter cannot be fully suppressed by the above-mentioned techniques. Developing advanced target detector can be employed together with such techniques to improve detection performance. Therefore, further research is necessary to evaluate the performance of ISAC in more practical scenarios.
\subsection{Contributions}
Unlike previous works, this paper considers target detection for ISAC in downlink cell-free massive MIMO systems using a multi-static sensing scheme. 
Two sensing scenarios are analyzed: i) only the downlink communication beams are utilized for sensing;
or ii) additional sensing beams with a certain allocated power are jointly transmitted with the communication beams. In the second scenario, we aim to improve the sensing performance by dedicating some resources to sensing.


 The communication performance is characterized by the minimum attained SINR by the UEs, and detection probability under a certain false alarm probability is used as the sensing performance metric. False alarm probability is the probability of that the system detects a target when the actual target is not present. A power allocation strategy is developed to optimize the sensing performance while satisfying SINR constraints for communication UEs, and per-AP transmit power constraints.

To improve sensing performance, we aim to maximize the sensing SINR under the condition that the target is present.\footnote{The optimal design approach would involve optimizing the transmit power coefficients to maximize the probability of detection. However, such an approach poses a significant challenge due to the strong coupling between the test statistics and the power coefficients.} As discussed in \cite[Chap. 3 and 15]{richards2010principles}, given a fixed false alarm probability, detection probability increases with a higher signal-to-noise ratio (SNR).
The main contributions of this paper are as below:
\begin{itemize}
    \item  A system model is proposed to consider the joint impact of unknown sensing channel coefficients and clutter. The latter is characterized by the target-free channels between transmit APs and receive APs. The channel estimation error is considered for the AP-UE channels, and a new SE expression is obtained by accounting for the interference created by the additional sensing beams. 
    
    \item To maximize the sensing SINR while satisfying specific communication SINR constraints a concave-convex programming (CCP)-based power allocation algorithm is proposed.
    To evaluate the effectiveness of the proposed power allocation algorithm, a communication-centric approach, called \textit{comm.-centric}, is used as the baseline algorithm and representative of the existing cell-free massive MIMO systems which are utilized for serving only communication UEs. The objective of the baseline algorithm is to minimize the total power consumption subject to the same constraints as the proposed algorithm.\footnote{In traditional communication systems, minimizing the total power consumption is one of the common objective functions in the optimization problems.} Conversely, the proposed algorithm maximizes the sensing SINR. In addition, we consider a new benchmark based on an orthogonal sharing approach in which the resources (symbols) in one coherence block are dedicated to communication and sensing in an orthogonal manner. Four algorithms are compared numerically:  i) the communication-centric (\textit{comm.-centric}); ii) the proposed ISAC power allocation algorithm ii.a) without additional sensing beams (\textit{ISAC}), ii.b) with dedicated additional sensing beams (\textit{ISAC+S}); and iii) power allocation algorithms based on orthogonal sharing (\textit{OS}).
  \item We introduce the sensing SE as a new sensing performance metric, which takes into account the impact of the sensing SINR as well as the resource block usage for sensing purposes. We compare the performance of the ISAC approach with an orthogonal sharing approach.
    \item We derive the maximum a posteriori ratio test (MAPRT) detector for target detection. The detector optimally fuses received signals from the receive APs during multiple time slots in a centralized manner to detect a target at a known location within a hotspot area. The detector takes into account the presence of clutter, which is not completely canceled out, and the target's unknown radar cross-section (RCS) values. 
    Moreover, we take into account the potential correlation between different arrival and departure directions to/from the target, which has not been considered before. The new target detector requires more advanced processing compared to the one proposed in our previous work \cite{behdad2022power}, which neglected the existence of clutter. We verify numerically that the proposed advanced processing scheme offers significantly improved detection probability. 
\end{itemize} 
The rest of the paper is organized as follows. Section \ref{section2} describes the system model, including the scenario, channel model, and signal model for communication and sensing. Uplink channel estimation and transmit ISAC precoding vectors are presented in Section \ref{section3}, while the downlink communication and SE are provided in Section \ref{section4}. Sections \ref{section5} and \ref{section6} are devoted to the multi-static sensing algorithm and deriving the MAPRT detector, respectively. The proposed power allocation algorithm is described in Section \ref{section7}. The numerical results are discussed in Section \ref{section8}. Finally, Section \ref{section9} concludes the paper.
\vspace{-2mm}
\subsection{Notations} 
The following mathematical notations are used throughout the paper. Scalars, vectors, and matrices are denoted by regular font, boldface lowercase, and boldface uppercase letters, respectively. The superscript $^T$ shows the transpose operation. Complex conjugate and Hermitian transpose are illustrated by the superscripts $^*$, and $^H$, respectively. The operation of the vectorization of a matrix is represented by $\textrm{vec}(\cdot)$ and the block diagonalization operation is denoted by $\textrm{blkdiag}(\cdot)$. 
The determinant, trace, and real parts of a matrix are represented by $\textrm{det}(\cdot)$, $\textrm{tr}(\cdot)$, and $\Re(\cdot)$, respectively. $\textbf{A}\otimes\textbf{B}$ represents the Kronecker product between matrix $\textbf{A}$ and $\textbf{B}$. The absolute value of a scalar is denoted by $\vert \cdot \vert$ while $\Vert \cdot \Vert$ shows the Euclidean norm of a vector. $\textrm{cov}(x_1,x_2)$ denotes the covariance of the random variables $x_1$ and $x_2$. The parameters are listed in Table~\ref{tab:parameters}.

\begin{table}[]
\centering
	\caption{List of parameters.}  \label{tab:parameters}
    
    \begin{tabular}{c|c}
    \textbf{Parameter} & \textbf{Definition} \\
    \hline
    \hline
        $N_{\rm tx}$ & Number of ISAC transmit APs\\
        \hline
        $N_{\rm rx}$ & Number of sensing receive APs\\
        \hline
        $N_{\rm ue}$ & Number of UEs \\
        \hline
        $M$ & Number of antenna elements per AP\\
        \hline
        $P_{\rm tx}$ & Maximum transmit power per AP \\
        \hline
        $ P_{\rm ul}$ & Uplink transmit power\\
        \hline
        $P_{k}$ & Transmit power at transmit AP $k$\\
        \hline
         $\tau_c,\tau_s$ & Length of coherence block and sensing block, respectively \\
         \hline
         $\tau_p$& Number of pilot symbols in one coherence block\\
         \hline
         $s_i[m]$ & Communication symbol for UE $i$ at time instance $m$\\
         \hline
          $s_0[m]$ & Sensing symbol at time instance $m$\\
         \hline
         $\rho_i, \rho_0$ & Power coefficient for UE $i$ and sensing, respectively\\
         \hline
         $\textbf{x}_k$ & Transmitted signal from transmit AP $k$\\
         \hline
         $\textbf{w}_{i,k}$ & Transmit precoding vector for transmit AP $k$ and UE $i$\\
         \hline
         $\textbf{w}_{0,k}$ & Precoding vector for transmit AP $k$ and sensing signal\\
         \hline
         $\alpha_{r,k}$ & Normalized bi-static RCS of the target from AP $k$ to  AP $r$ \\
         \hline
         $\sigma_{\rm rcs}^2$ & Bi-static RCS variance of the target\\
         \hline
         $\sigma_n^2$ & Variance of the receive noise\\
         \hline
         $\textbf{h}_{i,k}$ & Channel vector from transmit AP $k$ to UE $i$\\
         \hline
          $\textbf{h}_{0,k}$ & Channel vector from transmit AP $k$ to the sensing target\\
         \hline
          $\textbf{H}_{r,k}$ & Target-free channel between transmit AP $k$ and receive AP $r$\\
         \hline
         $y_i, \textbf{y}_r$ & Received signal at UE $i$ and AP $r$, respectively\\
         \hline
        
         $P_{\rm fa},P_{\rm d}$ & False alarm and detection probability, respectively\\
         \hline
         $\gamma_c$ & Communication SINR threshold \\
         \hline
    \end{tabular}
    
\end{table}
\vspace{-4mm}
\section{System Model}
\label{section2}
We consider an ISAC system with downlink communication and multi-static sensing in a C-RAN architecture, as illustrated in Fig.~\ref{fig1}. In this system, each AP either serves as an ISAC transmitter or a sensing receiver. The number of transmit and receive APs are denoted by $N_{\rm tx}$ and $N_{\rm rx}$ at a given instant, respectively. Each AP is equipped with $M$ isotropic antennas deployed as a horizontal uniform linear array (ULA). 

All APs are connected via fronthaul links to the edge cloud and fully synchronized. We consider the centralized implementation of cell-free massive MIMO \cite{cell-free-book} for both communication and sensing-related processing. The $N_{\rm tx}$ ISAC transmit APs jointly serve the $N_{\rm ue}$ UEs by transmitting centrally precoded signals. Optionally, an additional sequence of sensing symbols can be integrated into the transmission that shares the same waveform and time-frequency resources with the communication symbols. When this is the case, the transmit APs jointly steer a centralized beam toward a potential target with a known location within a hotspot area. On the other hand, the $N_{\rm rx}$ APs operate as sensing receivers by simultaneously sensing the location of the target to determine whether there is a target or not. All processing is done at the edge cloud in a centralized manner. 

\begin{figure}[tbp]
\centerline{\includegraphics[trim={5mm 5mm 4mm 2mm},clip,
width=0.3\textwidth]{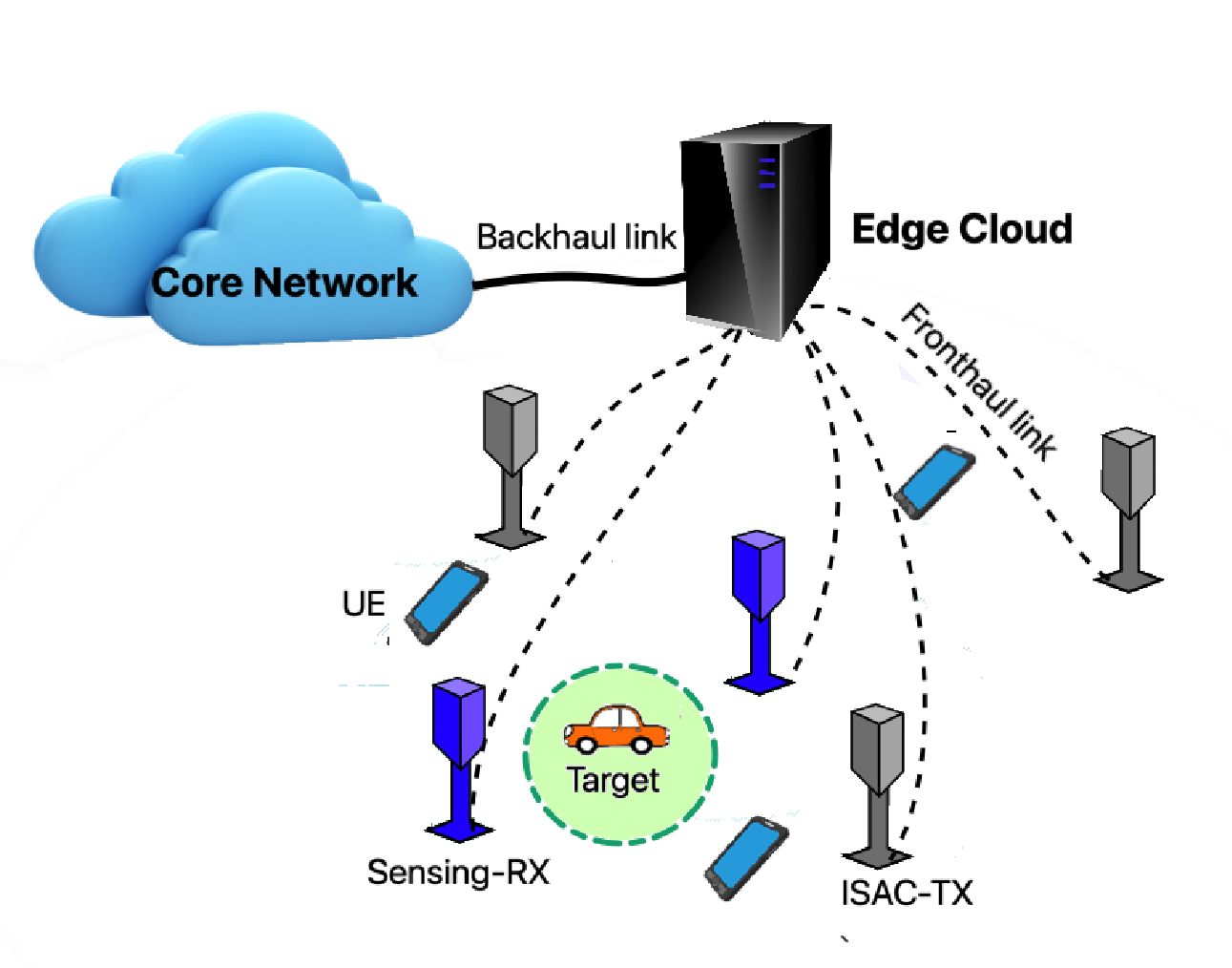}}
\caption{Illustration of the ISAC system setup.}
\label{fig1}
\vspace{-3mm}
\end{figure}
In this system model, $s_i$ represents the zero-mean downlink communication symbol for UE $i$ with unit power, i.e., $\mathbb{E}\{|s_i|^2\}=1$. The sensing signal, denoted by $s_0$, is also assumed to have zero mean and unit power, i.e., $\mathbb{E}\{|s_0|^2\}=1$. We assume that the sensing signal is independent of the UEs' data signals to simplify the analysis and power allocation algorithm\footnote{The sensing signal can instead be selected as a linear combination of UEs' data signals to potentially improve the communication performance for the UEs nearby the hotspot area. We leave the analysis of this case as  future work.}. 
In accordance with \cite{buzzi2019using}, the transmitted signal $\textbf{x}_k[m] \in \mathbb{C}^M$ from transmit AP $k$ at time instance $m$ is
\begin{equation}\label{x_k}
    \textbf{x}_k[m]= \sum_{i=0}^{N_{\rm ue}} \sqrt{\rho_{i}}\textbf{w}_{i,k} s_{i}[m]=\textbf{W}_k \textbf{D}_{\rm s}[m]\boldsymbol{ \rho}, 
\end{equation}
for $k=1,\ldots,N_{\rm tx}$,
where  $\textbf{w}_{i,k}\in \mathbb{C}^{M}$ and $\textbf{w}_{0,k}\in \mathbb{C}^M$ are the transmit precoding vectors for transmit AP $k$ corresponding to UE $i$ and the sensing signal, respectively.  
In \eqref{x_k}, $\textbf{W}_k= \begin{bmatrix}
\textbf{w}_{0,k} & \textbf{w}_{1,k} & \cdots & \textbf{w}_{N_{\rm ue},k}
\end{bmatrix}$, ${\textbf{D}}_{\rm s}[m]=\textrm{diag}\left(s_0[m],s_1[m],\ldots,s_{N_{\rm ue}}[m]\right)$ is the diagonal matrix containing the sensing and communication symbols, and $\boldsymbol{\rho}=[\sqrt{\rho_0} \ \ldots \sqrt{\rho_{N_{\rm ue}}}]^T$.

In line with the C-RAN architecture and the centralized operation of cell-free massive MIMO \cite{cell-free-book}, the precoding vectors for each UE and the target are jointly selected based on the CSI from all the $N_{\rm tx}M$ distributed transmit antennas. Hence, the precoding vectors for each transmit AP are extracted from the concatenated centralized precoding vectors given as
\begin{equation}\label{wi}
        \textbf{w}_i = \begin{bmatrix}
\textbf{w}_{i,1}^T & \textbf{w}_{i,2}^T & \hdots & \textbf{w}_{i,N_{\rm tx}}^T
\end{bmatrix}^T\in \mathbb{C}^{N_{\rm tx}M},
    \end{equation}
    for $i=0,1,\ldots,N_{\rm ue}$.
As seen in \eqref{x_k}, there is a common power control coefficient for each UE, $\rho_i\geq 0$, and for the target, $\rho_0\geq0$, since the precoding is centralized. The rationale is to preserve the direction of the centralized precoding vectors in \eqref{wi} and, hence, not to destroy its favorable characteristics constructed based on the overall channel from $N_{\rm tx}M$ antennas. Using the independence of the data and sensing signals, the average transmit power for transmit AP $k$ is given as
\begin{equation} \label{eq:Pk}
    P_k = 
    \sum_{i=0}^{N_{\rm ue}}\rho_i\mathbb{E}\left\{\Vert \textbf{w}_{i,k} \Vert^2\right\}, \quad k=1,\ldots,N_{\rm tx}.
\end{equation}
Each of the average transmit powers should satisfy the maximum power limit $P_{\rm tx}$, i.e., $P_k\leq P_{\rm tx}$. 

There are three types of channels in this system model: i) communication channels between ISAC transmit APs and the UEs; ii) sensing (target-related) channels between ISAC transmit/sensing receive APs and the target; and iii) interference (target-free) channels between ISAC transmit APs and the sensing receive APs. 
\subsection{Communication Channel Modeling}
We consider block fading communication channels where we have constant and independent channel realizations in each time-frequency coherence block. We let $\textbf{h}_{i,k}\in \mathbb{C}^M$ denote the channel from transmit AP $k$ to UE $i$. We assume a correlated Rayleigh fading distribution for which $\textbf{h}_{i,k}\sim \mathcal{CN}(\textbf{0}, \textbf{R}_{i,k})$ where $\textbf{R}_{i,k}=\mathbb{E}\left\{ \textbf{h}_{i,k} \textbf{h}_{i,k}^H\right\}$ is the spatial correlation matrix that includes the combined effect of geometric path loss, shadowing, and spatial correlation among the antennas for the channel realizations.
We define the communication channel from all the $N_{\rm tx}M$ transmit antennas to UE $i$ in the network as   $\textbf{h}_{i}=\begin{bmatrix}
\textbf{h}_{i,1}^T& \ldots&
\textbf{h}_{i,N_{\rm tx}}^T
\end{bmatrix}^T\in \mathbb{C}^{N_{\rm tx}M}$. 

Imperfect CSI of communication channels is considered, and the standard minimum mean-squared error (MMSE) channel estimator \cite{cell-free-book} is used during the uplink channel estimation phase.

\subsection{Sensing Channel Modeling}
For simplicity, we assume there is a line-of-sight (LOS) connection between the target location and each transmit/receive AP, and the non-line-of-sight (NLOS) connections are neglected. Assuming the antennas at the APs are half-wavelength-spaced, the antenna array response vector for the azimuth angle  $\varphi$ and elevation angle  $\vartheta$, denoted by $\textbf{a}(\varphi,\vartheta)\in \mathbb{C}^M$, is \cite{bjornson2017massive}
\begin{equation}\label{a(phi)}
        \textbf{a}(\varphi,\vartheta) =\begin{bmatrix}
          1& e^{j\pi \sin(\varphi)\cos(\vartheta)}& \ldots& e^{j(M-1)\pi\sin(\varphi)\cos(\vartheta)}
        \end{bmatrix} ^T. 
    \end{equation}

The concatenated array response vector from all the $N_{\rm tx}M$ transmit antennas to the target is given by $\textbf{h}_0 =\begin{bmatrix}
\sqrt{\beta_1}\textbf{a}^T(\varphi_{1},\vartheta_{1})&  \ldots&
\sqrt{\beta_{N_{\rm tx}}}\textbf{a}^T(\varphi_{N_{\rm tx}},\vartheta_{N_{\rm tx}})
\end{bmatrix}^T\in \mathbb{C}^{N_{\rm tx}M}$  using \eqref{a(phi)}. Here $\beta_k$, for $k=1,\dots, N_{\rm tx}$,  is the path loss corresponding to the transmit AP-target path.   
$\varphi_{k}$ and $\vartheta_{k}$ are the azimuth and elevation angles from transmit AP $k$ to the target location, respectively.
\subsection{Target-Free Channel Modeling}
Each receive AP receives both desired and undesired signals when the target is present. The former is the reflected signals from the target, while the latter is independent of the presence of the target.

Apart from the noise, the undesired signal includes clutter and LOS paths between transmit and receive APs. Clutter is composed of two terms: i) the NLOS paths due to scattering/reflection by the permanent objects and ii) the NLOS paths due to the temporary obstacles in the environment. The location and shape of permanent objects such as buildings, walls, or mountains are typically fixed and can be modeled accurately. This means that the LOS and NLOS paths caused by these permanent obstacles can be estimated or measured beforehand using various techniques such as ray tracing, simulation, or empirical models. Therefore, we assume that the LOS and NLOS paths due to the permanent obstacles are known. Assuming the transmit signal $\textbf{x}_k[m]$ is also known at the edge cloud, the known undesired parts of the received signal at each receive AP can be canceled. However, the unknown part corresponding to the NLOS paths caused by temporary obstacles remains.

Let us denote the unknown NLOS channel matrix between transmit AP $k$ and receive AP $r$ by $\textbf{H}_{r,k}\in \mathbb{C}^{M \times M}$. These channels correspond to the reflected paths through the temporary obstacles and are henceforth referred to as target-free channels.
We use the correlated Rayleigh fading model for the NLOS channels $\textbf{H}_{r,k}$, which is modeled using the Kronecker model \cite{Shiu2000a,abramovich2010iterative,zhou2012adaptive,wang2017polarimetric}. Let us define the random matrix $\textbf{W}_{{\rm ch},(r,k)}\in \mathbb{C}^{M \times M}$ with independent and identically distributed (i.i.d.) entries with $\mathcal{CN}(0,1)$ distribution. The matrix $\textbf{R}_{{\rm rx},(r,k)} \in \mathbb{C}^{M \times M}$ represents the spatial correlation matrix corresponding to receive AP $r$ and with respect to the direction of transmit AP $k$. Similarly, $\textbf{R}_{{\rm tx},(r,k)}\in \mathbb{C}^{M \times M}$ is the spatial correlation matrix corresponding to transmit AP $k$ and with respect to the direction of receive AP $r$. Then, the channel $\textbf{H}_{r,k}$ is written as
\begin{equation} \label{eq:Hrk}
    \textbf{H}_{r,k} = \textbf{R}^{\frac{1}{2}}_{{\rm rx},(r,k)} \textbf{W}_{{\rm ch},{(r,k)}}\left(\textbf{R}^{\frac{1}{2}}_{{\rm tx},(r,k)}\right)^T,
\end{equation}
 where the channel gain is determined by the geometric path loss and shadowing, and is included in the spatial correlation matrices. 
\section{Transmit ISAC precoding Vectors}\label{section3}
In this section, we elaborate on the selection of the ISAC precoding vectors based on the estimated UE channels and the location of the sensing point.

The unit-norm regularized zero-forcing (RZF) precoding vector \cite{cell-free-book} is  constructed for UE $i$ as $\textbf{w}_{i}=\frac{\bar{\textbf{w}}_{i}}{\left \Vert \bar{\textbf{w}}_{i}\right \Vert}$, where
\begin{equation}
     \bar{\textbf{w}}_{i}
     \!=\!\left(\sum\limits_{j=1}^{N_{\rm ue}}\hat{\textbf{h}}_j\hat{\textbf{h}}_j^H+\lambda\textbf{I}_{N_{\rm tx}M}\right)^{\!\!-1}\!\!\hat{\textbf{h}}_i, \quad i=1,\ldots,N_{\rm ue},
\end{equation}
and $\lambda$ is the regularization parameter. $\hat{\textbf{h}}_{j}=\begin{bmatrix}
\hat{\textbf{h}}_{j,1}^T& \ldots&
\hat{\textbf{h}}_{j,N_{\rm tx}}^T
\end{bmatrix}^T\in \mathbb{C}^{N_{\rm tx}M}$ 
is the MMSE channel estimate of the communication channel $\textbf{h}_{j}$. Note that since the communication symbols also contribute to sensing by the reflected paths toward the target, they are not considered as interference for the sensing target. Hence, we aim at nulling the interference only for the UEs. To null the destructive interference from the sensing signal to the UEs, the sensing precoding vector $\textbf{w}_{0}$ can be selected as the ZF precoder, i.e., by projecting $\textbf{h}_0$ onto the nullspace of the subspace spanned by the UE channel vectors as in \cite{buzzi2019using}, i.e.,$\textbf{w}_{0}=\frac{\bar{\textbf{w}}_{0}}{\left \Vert \bar{\textbf{w}}_{0}\right \Vert}$, where 
\begin{equation}\label{w0}
   \bar{\textbf{w}_{0}}=\left(\textbf{I}_{N_{\rm tx}M}-\textbf{U}\textbf{U}^{H}\right)\textbf{h}_0,
\end{equation}
where $\textbf{U}$ is the unitary matrix with the orthogonal columns that span the column space of the matrix $\begin{bmatrix}
         \hat{\textbf{h}}_{1}&  \ldots& \hat{\textbf{h}}_{N_{\rm ue}}
        \end{bmatrix}$. 
\section{Downlink Communication and Communication Spectral Efficiency for ISAC}\label{section4}
The received signal at UE $i$ is given as
\begin{align}    \label{y_i}
     y_i[m] =&\sum_{k=1}^{N_{\rm tx}} \textbf{h}^{H}_{i,k}\textbf{x}_{k}[m]+ n_i[m]=\underbrace{\sqrt{\rho_i}\textbf{h}_{i}^{H}\textbf{w}_{i} s_{i}[m]}_{\textrm{Desired signal}}\nonumber \\ &+ \underbrace{\sum_{j=1,j\neq i}^{N_{\rm ue}}\sqrt{\rho_j}\textbf{h}_{i}^{H}\textbf{w}_{j} s_{j}[m]}_{\textrm{Interference signal due to the other UEs}}\nonumber \\ &+ \underbrace{\sqrt{\rho_0}\textbf{h}_{i}^{H}\textbf{w}_{0} s_{0}[m]}_{\textrm{Interference signal due to the sensing}}+ \underbrace{n_i[m]}_{\textrm{Noise}} ,
\end{align}
where the receiver noise at the UE $i$  is represented by $n_i[m] \sim \mathcal{CN}(0,\sigma_n^2)$. We introduce the complex conjugation on the channels $\textbf{h}_{i,k}$ for notational convenience in the downlink, as it is defined in \cite{behdad2022power}. %
For a C-RAN-assisted cell-free massive MIMO system with ISAC, where only $\normalfont\mathbb{E}\left\{\textbf{h}_{i}^{H}\textbf{w}_{i}\right\}$ is known at UE $i$, an achievable SE of this UE is given by
\begin{align} \label{eq:SE_i}
    \mathsf{SE}^{(\rm dl, isac)}_i = \frac{\tau_c-\tau_p}{\tau_c} \log_2\left(1+\mathsf{SINR}^{(\rm dl)}_i\right),
\end{align}
with the downlink SINR as in \eqref{sinr_i}
\begin{figure*}[!t]
\begin{align}\label{sinr_i}
 \mathsf{SINR}^{(\rm dl)}_i=
     & \frac{\rho_i\left\vert \normalfont\mathbb{E}\left\{\textbf{h}_i^H \textbf{w}_i\right\}\right\vert^2}{\sum_{j=1}^{N_{\rm ue}}\rho_j\normalfont\mathbb{E}\left\{\left\vert \textbf{h}_i^H \textbf{w}_j\right\vert^2\right\}-\rho_i\normalfont\left\vert \mathbb{E}\left\{\textbf{h}_i^H \textbf{w}_i\right\}\right\vert^2+\rho_0\normalfont\mathbb{E}\left\{\left\vert \tilde{\textbf{h}}_i^H \textbf{w}_0\right\vert^2\right\}+\sigma_n^2},
     \end{align}
     \hrulefill
     \end{figure*}
 where $\normalfont\tilde{\textbf{h}}_{i}=\begin{bmatrix}
\tilde{\textbf{h}}_{i,1}^T& \ldots&
\tilde{\textbf{h}}_{i,N_{\rm tx}}^T
\end{bmatrix}^T\in \mathbb{C}^{N_{\rm tx}M}$ is the concatenated channel estimation error.
The first two terms in the denominator of \eqref{sinr_i} is related to the interference power due to the other UEs in the system and self interference, while the third one, i.e., $\rho_0\normalfont\mathbb{E}\{\vert \tilde{\textbf{h}}_i^H \textbf{w}_0\vert^2\}$ is the interference power coming from the sensing symbols. Each coherence block contains $\tau_c$ symbols where $\tau_p$ symbols are used as pilots for uplink channel estimation. The pre-log factor represents the fraction of the downlink data within a  coherence block.

The SE given above is achievable when channel coding is applied over a long data sequence during many coherence blocks. In line with this requirement, we assume that we have multiple target locations in the system where sensing each of them takes a portion of the whole communication interval. This implies that the precoding vector $\textbf{w}_0$ changes over time during the whole communication interval according to different locations of the targets and the channels of UEs that vary from coherence block to coherence block. Therefore, the expectation $\mathbb{E}\{\vert \tilde{\textbf{h}}_i^H \textbf{w}_0\vert^2\}$ in \eqref{sinr_i} should be taken over both $\textbf{h}_i$ and  $\textbf{w}_0$ for different channel realizations and sensing locations. This can be done using Monte Carlo simulations.
\section{Multi-Static Sensing}\label{section5}
We consider multi-static sensing where the sensing transmitters and the receivers are not co-located. Apart from the $N_{\rm tx}$ transmit APs, the other $N_{\rm rx}$ receive APs jointly receive the sensing signals for target detection.
We define $\tau_s$ as the number of symbols in one sensing block. In general, we may have either several sensing blocks in each coherence block or a longer sensing block which contains several coherence blocks. Without loss of generality, we consider the former case where $\tau_s \leq \tau_c$. After canceling the known undesired terms, the received signal at AP $r$ in the presence of the target is formulated as
\begin{align}\label{y_rPrim}
          \textbf{y}_r[m] =& \sum_{k=1}^{N_{\rm tx}} \alpha_{r,k}  \sqrt{\beta_{r,k}}\textbf{a}(\phi_{r},\theta_{r})\textbf{a}^{T}(\varphi_{k},\vartheta_{k})\textbf{x}_k[m]\nonumber\\
          &+\sum_{k=1}^{N_{\rm tx}}  \textbf{H}_{r,k}\textbf{x}_k[m]+\textbf{n}_r[m], \quad m=1, \ldots, \tau_s
\end{align}
where ${\textbf{n}_r[m]\sim \mathcal{CN}(\textbf{0},\sigma_n^2\textbf{I}_M)}$ is the receiver noise at the $M$ antennas of receive AP $r$. 
The vector $\textbf{H}_{r,k}\textbf{x}_k[m]$ in \eqref{y_rPrim} can be considered as the interference for the target detection. Moreover, the known part of each reflected path is denoted by ${\textbf{g}_{r,k}[m]\in \mathbb{C}^{M}}$ as
\begin{equation} \label{g_r,k}
   \textbf{g}_{r,k}[m]\triangleq \sqrt{\beta_{r,k}}\textbf{a}(\phi_{r},\theta_{r})\textbf{a}^{T}(\varphi_{k},\vartheta_{k})\textbf{x}_k[m],
\end{equation}
where the matrix  $\alpha_{r,k}\sqrt{\beta_{r,k}} \textbf{a}(\phi_{r},\theta_{r})\textbf{a}^{T}(\varphi_{k},\vartheta_k)$ represents the reflected path through the target. $\phi_r$ and $\theta_r$ are respectively the azimuth and elevation angles from the target location to receive AP $r$. Here,  $\alpha_{r,k}\!\sim\!\mathcal{CN}(0,\,1)$ 
is the normalized bi-static RCS of the target through the reflection path from transmit AP $k$ to receive AP $r$ and is assumed to be unknown. $\beta_{r,k}$ is the channel gain including the path loss from transmit AP $k$ to receive AP $r$ through the target and the variance of bi-static RCS of the target denoted by $\sigma_{r,k}^2$. We follow the Swerling-I model for the RCS, in which the target's velocity is low compared to the total sensing duration. Hence, $\alpha_{r,k}$ is constant throughout the consecutive symbols collected for sensing. We inject the phase-shift of this path to the unknown complex-valued RCS $\alpha_{r,k}$ without loss of generality. 

Let us define $d_{{\rm tx},k}$ as the distance between transmit AP $k$ and target location, $d_{{\rm rx},r}$ as the distance between receive AP $r$ and target location, and $\lambda_c$ as the carrier wavelength. Using the radar range equation for bi-static sensing in \cite[Chap.~2]{richards2010principles}, the channel gain $\beta_{r,k}$ is
\begin{align} \label{eq:radar_range_equation}
    \beta_{r,k} = \frac{\lambda_c^2 \,\sigma_{r,k}^2}{(4\pi)^3 d_{{\rm tx},k}^2 d_{{\rm rx},r}^2}.
\end{align}

Each receive AP sends their respective signals $\textbf{y}_r[m]$, for $r=1,\ldots,N_{\rm rx}$, to the edge cloud. 
    The received signals from the $N_{\rm rx}$ APs can be concatenated to obtain the overall sensing signal
    \begin{equation} \label{y-prime}
  \textbf{y}[m] = \begin{bmatrix}
\textbf{y}_{1}^T[m]&  \cdots&\textbf{y}_{N_{\rm rx}}^T[m]
\end{bmatrix}^T\in \mathbb{C}^{N_{\rm rx}M}.
    \end{equation}
    
To introduce a more concise notation, we concatenate the unknown sensing channel coefficients for each receiver $r$ as $
    \boldsymbol{\alpha}_r \triangleq [ \alpha_{r,1} \ \ldots \ \alpha_{r,N_{\rm tx}} ]^T \in \mathbb{C}^{N_{\rm tx}}$, then collect all  $\boldsymbol{\alpha}_r$ terms in a vector as
$
\boldsymbol{\alpha}= [ \boldsymbol{\alpha}_1^T \ \ldots \ \boldsymbol{\alpha}_{N_{\rm rx}}^T]^T \in \mathbb{C}^{N_{\rm tx}N_{\rm rx}}$. Moreover,  using $\textbf{g}_{r,k}[m]$ from \eqref{g_r,k}, we define the concatenated matrix for receive AP $r$ and symbol $m$ as $\textbf{G}_r[m] = \begin{bmatrix}\textbf{g}_{r,1}[m]& \cdots &\textbf{g}_{r,N_{\rm tx}}[m] \end{bmatrix}\in \mathbb{C}^{M \times N_{\rm tx}}$.

Moreover, we define $\boldsymbol{\mathfrak{h}}_{r,k}= \textrm{vec}\left(\textbf{H}_{r,k}\right)$ as the vectorized unknown target-free channel between receive AP $r$ and the transmit AP $k$ where $\boldsymbol{\mathfrak{h}}_{r,k}\sim \mathcal{CN}\left(\textbf{0},\overline{\textbf{R}}_{r,k}\right) \in \mathbb{C}^{M^2}$. Then, the channel vector $\boldsymbol{\mathfrak{h}}_{r,k}$ can be expressed as
\begin{align}\label{h_r,k}
     \boldsymbol{\mathfrak{h}}_{r,k}&= \left(\textbf{R}^{\frac{1}{2}}_{{\rm tx},(r,k)} \otimes \textbf{R}^{\frac{1}{2}}_{{\rm rx},(r,k)} \right) \textrm{vec}\left(\textbf{W}_{{\rm ch},{(r,k)}}\right) \nonumber\\
     &= \overline{\textbf{R}}_{r,k}^{\frac{1}{2}}\textrm{vec}\left(\textbf{W}_{{\rm ch},{(r,k)}}\right),
 \end{align}
where we have defined $\overline{\textbf{R}}_{r,k}^{\frac{1}{2}} \triangleq \textbf{R}^{\frac{1}{2}}_{{\rm tx},(r,k)} \otimes \textbf{R}^{\frac{1}{2}}_{{\rm rx},(r,k)} $ and used the identity $\mathrm{vec}(\textbf{A}\textbf{B}\textbf{C})=(\textbf{C}^T \otimes \textbf{A})\mathrm{vec}(\textbf{B})$ with $\textbf{A}=\textbf{R}^{\frac{1}{2}}_{{\rm rx},(r,k)}$, $\textbf{B}=\textbf{W}_{{\rm ch},{(r,k)}}$, and $\textbf{C}=\textbf{R}^{\frac{1}{2}}_{{\rm tx},(r,k)}$. We also recognize the square root of the spatial correlation matrix $\overline{\textbf{R}}_{r,k}$ by noting that  $\mathbb{E}\left\{\textrm{vec}\left(\textbf{W}_{{\rm ch},{(r,k)}}\right)\textrm{vec}\left(\textbf{W}_{{\rm ch},{(r,k)}}\right)^H\right\}=\textbf{I}_{M^2}$. Similarly, the interference signal for each receive AP $r$ can be rewritten as 
\begin{align}
    \sum_{k=1}^{N_{\rm tx}}\textbf{H}_{r,k}\textbf{x}_k[m]& =  \sum_{k=1}^{N_{\rm tx}}\left(\textbf{x}^T_k[m] \otimes \textbf{I}_{M} \right)\boldsymbol{\mathfrak{h}}_{r,k}\nonumber\\
    &=\left(\textbf{x}^T[m] \otimes \textbf{I}_{M} \right)\boldsymbol{\mathfrak{h}}_{r},
\end{align}
where we concatenate the unknown and known parts of the interference signals as $\boldsymbol{\mathfrak{h}}_{r}=\begin{bmatrix}\boldsymbol{\mathfrak{h}}_{r,1}^T&\cdots&\boldsymbol{\mathfrak{h}}_{r,N_{\rm tx}}^T\end{bmatrix}^T \in \mathbb{C}^{N_{\rm tx}M^2}$ and $
\textbf{x}[m] = \begin{bmatrix}\textbf{x}^T_1[m]& \cdots &\textbf{x}^{T}_{N_{\rm tx}}[m] \end{bmatrix}^T\in \mathbb{C}^{ N_{\rm tx}M}$, respectively. 

Finally, the overall received signal in \eqref{y-prime} can be expressed as
\begin{align}\label{y_m}
     \textbf{y}[m] &= \underbrace{\mathrm{blkdiag}\left(\textbf{G}_{1}[m],\ldots,\textbf{G}_{N_{\rm rx}}[m]\right)}_{\triangleq \textbf{G}[m]}\boldsymbol{\alpha}\nonumber\\
     &+\underbrace{\left(\textbf{I}_{N_{\rm rx}} \otimes \left(\textbf{x}^T[m] \otimes \textbf{I}_{M}\right)\right)}_{\triangleq \textbf{X}[m]}\boldsymbol{\mathfrak{h}}+\textbf{n}[m],
\end{align}
where $\mathrm{blkdiag}(\cdot)$ constructs a block diagonal matrix, $\boldsymbol{\mathfrak{h}}\!= \!\left[
     \boldsymbol{\mathfrak{h}}_{1}^T \ \!\cdots \ \! \boldsymbol{\mathfrak{h}}^T_{N_{\rm rx}}\right]^T\!\in \!\mathbb{C}^{N_{\rm tx}N_{\rm rx}M^2}$, and $\textbf{n}[m]\!=\!\left[\!\textbf{n}_1^T[m]\! \ \cdots \ \textbf{n}_{N_{\rm rx}}^T[m]\right]^T\!\in\!\mathbb{C}^{N_{\rm rx}M}$ is the concatenated receiver noise.
Given that the channels $\textbf{H}_{r,k}$ are independent for different $r$ and/or $k$ (due to sufficiently separated APs), the correlation matrix of $\boldsymbol{\mathfrak{h}}$ becomes the block diagonal matrix of the spatial correlation matrices $\overline{\mathbf{R}}_{r,k}$, i.e., 
\begin{align}\label{eq:R}
    &\normalfont\textbf{R}=\mathbb{E}\left\{\boldsymbol{\mathfrak{h}}\boldsymbol{\mathfrak{h}}^H\right\}\\
    &= \normalfont\mathrm{blkdiag}\left(\overline{\textbf{R}}_{1,1},\cdots,\overline{\textbf{R}}_{1,N_{\rm tx}},\cdots,\overline{\textbf{R}}_{N_{\rm rx},1},\cdots,\overline{\textbf{R}}_{N_{\rm rx},N_{\rm tx}} \right).\nonumber
\end{align}

\section{Sensing Performance Metrics}
Detection probability and false alarm probability are fundamental metrics used to assess sensing performance \cite{richards2010principles}. Detection probability, represented by $P_{\rm d}$, is the likelihood of correctly detecting a target when it is present. False alarm probability, denoted by $P_{\rm fa}$, is defined as the probability of incorrectly detecting a target when it is not present.

In addition to the abovementioned metrics, sensing mutual information (MI) has also been proposed in the literature to measure the sensing performance \cite{li2022framework,li2022integrated,yang2007mimo, ahmadipour2022information, xie2023sensing}. 
From an information-theoretic perspective, the objective of sensing is to retrieve environmental information embedded in the reflected signals from the target and the sensing MI is defined as the MI between the target's response and the reflected signals. It is shown that maximizing MI between the target's response and the reflected signals minimizes the MMSE of the estimation of target's response \cite{yang2007mimo}. Following the concept of sensing MI, sensing rate \cite{ouyang2022performance} 
or radar information rate \cite{meng2023network} are also defined as the sensing MI per unit time-frequency resource block. The authors of \cite{meng2023network} propose area spectral efficiency (ASE) as the network-level performance metric for both communication and sensing. 

However, the ASE expression in \cite{meng2023network} does not take into account the impact of the resource block usage in sensing. In ISAC approach, we employ $\tau_c-\tau_p$ symbols jointly for sensing and communication. This implies that we can have multiple sensing blocks in one coherence block given that $\tau_s<\tau_c$. To this end, we define the sensing SE as  
\begin{align}\label{eq: SE_S}
    \mathsf{SE}_{\rm s} = \frac{\tau_s^{\rm total}}{\tau_c} \log_2\left(1+\mathsf{SINR}_{\rm s}\right),
\end{align}
where $\tau_s^{\rm total}$ in the pre-log factor is the total number of symbols used for sensing in one coherence block and $\mathsf{SINR}_{\rm s}$ is the sensing SINR, which we will derive later.

\section{MAPRT Detector for Sensing}\label{section6}
In this paper, we employ the MAPRT detector for our system model and extend the derivation in \cite{guruacharya2020map}, which considers only a single transmitter and single-antenna receivers. 

Different from \cite{guruacharya2020map}, in our system, the transmitted signals are known due to the C-RAN architecture. Furthermore, the relationship between the received signals at the receive APs and the unknown RCSs is more complex due to the direction-dependent MIMO channels from the transmit APs to the receive APs through the target. Our derived test statistics demonstrates not only how the detector should be implemented but also how the different signals from the receive APs should be centrally fused in the edge cloud.      
The detection is applied using $\tau_s$ received sensing symbols for a particular target location.
 Let us define the vectors $\textbf{y}_{\tau}\in \mathbb{C}^{MN_{\rm rx}\tau_s}$, $\textbf{n}_{\tau}\in \mathbb{C}^{MN_{\rm rx}\tau_s}$, $\textbf{h}_{\tau}\in \mathbb{C}^{MN_{\rm rx}\tau_s}$, and $\textbf{g}_{\tau}\in \mathbb{C}^{MN_{\rm rx}\tau_s}$ constructed by concatenating $\textbf{y}[m]$, $\textbf{n}[m]$, $\textbf{X}[m]\boldsymbol{\mathfrak{h}}$, and $\textbf{G}[m]\boldsymbol{\alpha}$ in \eqref{y_m}, respectively for $\tau_s$ symbols. The binary hypothesis  used in the MAPRT detector is written as
\begin{align}\label{hypothesis}
   &\mathcal{H}_0 : \textbf{y}_{\tau}= \textbf{h}_{\tau}+\textbf{n}_{\tau} \nonumber\\
  &\mathcal{H}_1 :\textbf{y}_{\tau}=\textbf{g}_{\tau}+\textbf{h}_{\tau}+\textbf{n}_{\tau}.
\end{align}

 The null hypothesis $\mathcal{H}_0$ represents the case that there is no target in the sensing area, 
 while the alternative hypothesis $\mathcal{H}_1$ represents the existence of the target. 
We let $\mathcal{H}\in \{\mathcal{H}_0, \mathcal{H}_1\}$ be the set of hypotheses. We also recall that $\boldsymbol{\alpha}$ and  $\boldsymbol{\mathfrak{h}}$ are the vectors of unknown channel gains with respect to the RCS of the target and the target-free channels, respectively. The joint RCS estimation, target-free channel estimation, and hypothesis testing problem can be written as 
 \begin{equation} \label{joint testing problem}
    \left( \hat{\boldsymbol{\alpha}},\hat{\boldsymbol{\mathfrak{h}}},\hat{\mathcal{H}}\right) =\argmax_{\boldsymbol{\alpha}, \boldsymbol{\mathfrak{h}}, \mathcal{H}} p\left(\boldsymbol{\alpha}, \boldsymbol{\mathfrak{h}},\mathcal{H}|\textbf{y}_{\tau}\right), 
\end{equation}
where $p\left(\boldsymbol{\alpha},\boldsymbol{\mathfrak{h}},\mathcal{H}|\textbf{y}_{\tau}\right)$ is the joint probability density function (PDF) of $\boldsymbol{\alpha}$, $\boldsymbol{\mathfrak{h}}$, and $\mathcal{H}$ given the received signal vector $\textbf{y}_{\tau}$. 
Following the approach in \cite{guruacharya2020map} and \cite{behdad2022power} for our problem and assuming $\boldsymbol{\alpha}$ and $\boldsymbol{\mathfrak{h}}$ are independent of each other,  the corresponding MAPRT detector can be expressed as 
\begin{align} \label{eq:likelihood-MAPRT}
    \Lambda = \frac{\max\limits_{\boldsymbol{\alpha},\boldsymbol{\mathfrak{h}}} \, p\left(\textbf{y}_{\tau}|\boldsymbol{\alpha},\boldsymbol{\mathfrak{h}},\mathcal{H}_1\right) p\left(\boldsymbol{\alpha}|\mathcal{H}_1\right)p\left(\boldsymbol{\mathfrak{h}}|\mathcal{H}_1\right)}{\max\limits_{\boldsymbol{\mathfrak{h}}} \, p\left(\textbf{y}_{\tau}|\boldsymbol{\mathfrak{h}},\mathcal{H}_0\right)p\left(\boldsymbol{\mathfrak{h}}|\mathcal{H}_0\right) }\begin{matrix}
    \mathcal{H}_1\\\geq\\<\\ \mathcal{H}_0
    \end{matrix} \lambda_d,
\end{align}
 where $\lambda_d$ is the threshold used by the detector, which is selected to achieve a desired false alarm probability and $p(\cdot)$ denotes the PDF. 
In \eqref{eq:likelihood-MAPRT}, we maximize the joint PDFs with respect to the estimation parameters in \eqref{joint testing problem} under the hypothesis $\mathcal{H}_1$ and $\mathcal{H}_0$ in the numerator and the denominator, respectively. 
To streamline the analysis, we define the above-mentioned optimization problems by $\Lambda_{\rm num}$ and $\Lambda_{\rm den}$, respectively, which are expressed as
\begin{align}
    &\Lambda_{\rm num}= \max\limits_{\boldsymbol{\alpha},\boldsymbol{\mathfrak{h}}} \, p\left(\textbf{y}_{\tau}|\boldsymbol{\alpha},\boldsymbol{\mathfrak{h}},\mathcal{H}_1\right) p\left(\boldsymbol{\alpha}|\mathcal{H}_1\right)p\left(\boldsymbol{\mathfrak{h}}|\mathcal{H}_1\right), \label{maxUnderH1}\\
    &\Lambda_{\rm den}= \max\limits_{\boldsymbol{\mathfrak{h}}} \, p\left(\textbf{y}_{\tau}|\boldsymbol{\mathfrak{h}},\mathcal{H}_0\right)p\left(\boldsymbol{\mathfrak{h}}|\mathcal{H}_0\right)\label{maxUnderH0}.
\end{align}
The solutions to these optimization problems under $\mathcal{H}_0$ and $\mathcal{H}_1$ 
are presented in Lemma \ref{lemma:H0}.
\begin{lemma} \label{lemma:H0}
Consider the binary hypothesis defined in \eqref{hypothesis} and the overall received signal for sensing for all time instants $m \in \{1, \ldots,\tau_s\}$ as defined in \eqref{y_m}. Given that $\normalfont\bm{\mathfrak{h}} \sim \mathcal{CN}\left( \textbf{0}, \textbf{R}\right)$ and $\normalfont\boldsymbol{\alpha}\sim \mathcal{CN}\left(\textbf{0}, \textbf{R}_{\rm rcs}\right)$, where  $\normalfont\textbf{R}_{\rm rcs}$ is the covariance matrix of the correlated RCS values, we have the following results.

{\bf Optimization under $\mathcal{H}_0$:}
Given that $\mathcal{H}_0$ is true, 
the denominator of the likelihood ratio in \eqref{eq:likelihood-MAPRT} is
\begin{align} \label{Lambda_den}
  \Lambda_{\rm den}=  C_1 C_2 \exp\left(\normalfont\frac{\textbf{b}^H\textbf{D}^{-1}\textbf{b}-F}{\sigma_n^2}\right),
\end{align}
 where $C_1 = \frac{1}{\left(\pi \sigma_n^2\right)^{ MN_{\rm rx}\tau_s}}$ and $C_2 =\frac{1}{\pi^{M^2N_{\rm tx}N_{\rm rx}
    }\det\left(\normalfont\textbf{R}\right)}$.  The matrix $\normalfont\textbf{D}$, the vector $\normalfont\textbf{b}$, and the scalar $F$ are defined in \eqref{eq:C-D}, \eqref{eq:a-b}, and \eqref{eq:e-F}, respectively.
    
{\bf Optimization under $\mathcal{H}_1$:} Given that $\mathcal{H}_1$ is true, 
the numerator of the likelihood ratio in \eqref{eq:likelihood-MAPRT} is given as
\begin{align}\label{Lambda_num}
    &\Lambda_{\rm num}= C_1 C_2 C_3 \nonumber\\
    &\hspace{3mm} \times
    \exp\left(\frac{1}{\sigma_n^2} \normalfont\begin{bmatrix}
\textbf{a}^H &
\textbf{b}^H
\end{bmatrix} \begin{bmatrix}
 \textbf{C}& \textbf{E} \\ 
 \textbf{E}^H&\textbf{D} 
\end{bmatrix}^{-1}\begin{bmatrix}
\textbf{a}\\ 
\textbf{b}
\end{bmatrix}-\frac{1}{\sigma_n^2} F\right), 
\end{align}
where
\begin{align}
    &\normalfont\textbf{a}= \sum_{m=1}^{\tau_s} \textbf{G}^H[m] \textbf{y}[m],\quad \normalfont\textbf{b}= \sum_{m=1}^{\tau_s} \textbf{X}^H[m] \textbf{y}[m],\label{eq:a-b}\\
    &\normalfont\textbf{C}= \sum_{m=1}^{\tau_s} \textbf{G}^H[m]\textbf{G}[m]+ \sigma_n^2 \textbf{R}_{\rm rcs}^{-1}, \\
    &\normalfont\textbf{D}= \sum_{m=1}^{\tau_s} \textbf{X}^H[m]\textbf{X}[m]+ \sigma_n^2\textbf{R}^{-1}, \label{eq:C-D}\\
   & \normalfont\textbf{E}= \sum_{m=1}^{\tau_s} \textbf{G}^H[m]\textbf{X}[m],\quad \normalfont F = \sum_{m=1}^{\tau_s} \Vert \textbf{y}[m]\Vert^2. \label{eq:e-F}
\end{align}
\end{lemma}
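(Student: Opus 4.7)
My plan is to compute both maxima by completing the square in the exponent, since every factor entering $\Lambda_{\rm num}$ and $\Lambda_{\rm den}$ is a complex Gaussian density and the log of the joint likelihood-prior product is a quadratic in the unknown parameters. Concretely, since $\mathbf{n}[m]\sim\mathcal{CN}(\mathbf{0},\sigma_n^2\mathbf{I}_{MN_{\rm rx}})$ are i.i.d.\ over $m$, and $\boldsymbol{\mathfrak{h}}$ and $\boldsymbol{\alpha}$ are independent zero-mean complex Gaussians with covariances $\mathbf{R}$ and $\mathbf{R}_{\rm rcs}$, I would first write
\begin{equation*}
p(\mathbf{y}_\tau\mid\boldsymbol{\alpha},\boldsymbol{\mathfrak{h}},\mathcal{H}_1)p(\boldsymbol{\alpha}\mid\mathcal{H}_1)p(\boldsymbol{\mathfrak{h}}\mid\mathcal{H}_1)=C_1C_2C_3\exp\!\left(-\tfrac{1}{\sigma_n^2}Q_1(\boldsymbol{\alpha},\boldsymbol{\mathfrak{h}})\right),
\end{equation*}
where $C_3=1/(\pi^{N_{\rm tx}N_{\rm rx}}\det(\mathbf{R}_{\rm rcs}))$ and $Q_1=\sum_m\|\mathbf{y}[m]-\mathbf{G}[m]\boldsymbol{\alpha}-\mathbf{X}[m]\boldsymbol{\mathfrak{h}}\|^2+\sigma_n^2\boldsymbol{\alpha}^H\mathbf{R}_{\rm rcs}^{-1}\boldsymbol{\alpha}+\sigma_n^2\boldsymbol{\mathfrak{h}}^H\mathbf{R}^{-1}\boldsymbol{\mathfrak{h}}$. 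An analogous expression holds under $\mathcal{H}_0$ with the $\boldsymbol{\alpha}$-terms dropped.

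For the denominator I would expand the squared norm, collect the constant term as $F=\sum_m\|\mathbf{y}[m]\|^2$, the linear term as $-\mathbf{b}^H\boldsymbol{\mathfrak{h}}-\boldsymbol{\mathfrak{h}}^H\mathbf{b}$ with $\mathbf{b}=\sum_m\mathbf{X}^H[m]\mathbf{y}[m]$, and the quadratic term as $\boldsymbol{\mathfrak{h}}^H\mathbf{D}\boldsymbol{\mathfrak{h}}$ where $\mathbf{D}=\sum_m\mathbf{X}^H[m]\mathbf{X}[m]+\sigma_n^2\mathbf{R}^{-1}$—the prior naturally supplying the regularizer. The standard complex-vector completion of the square,
\begin{equation*}
\boldsymbol{\mathfrak{h}}^H\mathbf{D}\boldsymbol{\mathfrak{h}}-\mathbf{b}^H\boldsymbol{\mathfrak{h}}-\boldsymbol{\mathfrak{h}}^H\mathbf{b}=(\boldsymbol{\mathfrak{h}}-\mathbf{D}^{-1}\mathbf{b})^H\mathbf{D}(\boldsymbol{\mathfrak{h}}-\mathbf{D}^{-1}\mathbf{b})-\mathbf{b}^H\mathbf{D}^{-1}\mathbf{b},
\end{equation*}
is minimized at $\hat{\boldsymbol{\mathfrak{h}}}=\mathbf{D}^{-1}\mathbf{b}$ with minimum value $F-\mathbf{b}^H\mathbf{D}^{-1}\mathbf{b}$. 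Substituting yields \eqref{Lambda_den}.

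For the numerator the same idea applies, but to handle the two unknowns jointly I would stack $\mathbf{z}=[\boldsymbol{\alpha}^T\ \boldsymbol{\mathfrak{h}}^T]^T$ and $\mathbf{M}[m]=[\mathbf{G}[m]\ \mathbf{X}[m]]$, so that $\mathbf{G}[m]\boldsymbol{\alpha}+\mathbf{X}[m]\boldsymbol{\mathfrak{h}}=\mathbf{M}[m]\mathbf{z}$. Expanding $Q_1$ gives $F-\mathbf{r}^H\mathbf{z}-\mathbf{z}^H\mathbf{r}+\mathbf{z}^H\mathbf{K}\mathbf{z}$, where $\mathbf{r}=[\mathbf{a}^T\ \mathbf{b}^T]^T$ with $\mathbf{a}=\sum_m\mathbf{G}^H[m]\mathbf{y}[m]$, and
\begin{equation*}
\mathbf{K}=\sum_{m=1}^\tau\mathbf{M}^H[m]\mathbf{M}[m]+\sigma_n^2\,\mathrm{blkdiag}(\mathbf{R}_{\rm rcs}^{-1},\mathbf{R}^{-1})=\begin{bmatrix}\mathbf{C}&\mathbf{E}\\\mathbf{E}^H&\mathbf{D}\end{bmatrix},
\end{equation*}
exactly as defined in \eqref{eq:C-D}--\eqref{eq:e-F}. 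Since the independence of $\boldsymbol{\alpha}$ and $\boldsymbol{\mathfrak{h}}$ makes the prior covariance block-diagonal (no cross term in the regularizer), the cross block of $\mathbf{K}$ comes purely from the observation matrix $\mathbf{E}$. A second completion of the square, this time in the stacked vector, gives the joint MAP estimate $\hat{\mathbf{z}}=\mathbf{K}^{-1}\mathbf{r}$ and minimum $F-\mathbf{r}^H\mathbf{K}^{-1}\mathbf{r}$, which plugged into the exponent delivers \eqref{Lambda_num}.

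The computations are essentially routine, so I do not expect a serious obstacle; the only point that requires care is ensuring that $\mathbf{K}$ is positive definite so its inverse exists. This follows because the regularizer $\sigma_n^2\,\mathrm{blkdiag}(\mathbf{R}_{\rm rcs}^{-1},\mathbf{R}^{-1})$ is positive definite whenever the covariances have full rank, and the observation-induced term is positive semidefinite; the same argument applies to $\mathbf{D}$ under $\mathcal{H}_0$. If one wanted to handle rank-deficient priors, the MAP estimate should be taken in the appropriate subspace, but under the lemma's stated assumption of full-rank covariances the inversions are well defined and the closed forms are obtained directly.
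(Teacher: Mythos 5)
Your proposal is correct and follows essentially the same route as the paper's proof: write each likelihood--prior product as $C\exp(-Q/\sigma_n^2)$ with $Q$ quadratic in the unknowns, identify $\mathbf{a},\mathbf{b},\mathbf{C},\mathbf{D},\mathbf{E},F$, and minimize $Q$ (the paper zeroes the gradient, you complete the square — the same computation), obtaining $\hat{\boldsymbol{\mathfrak{h}}}=\mathbf{D}^{-1}\mathbf{b}$ under $\mathcal{H}_0$ and the stacked MAP estimate under $\mathcal{H}_1$. Your explicit remark on positive definiteness of the block matrix is a small but welcome addition that the paper leaves implicit.
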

\begin{proof}
See Appendix~\ref{appendix:A}.
\end{proof}

By substituting \eqref{Lambda_den} and \eqref{Lambda_num} into \eqref{eq:likelihood-MAPRT}, the MAPRT detector is obtained as \eqref{eq:lambda}. 
\begin{figure*}[t!]
\begin{align}\label{eq:lambda}
    \Lambda = C_3 \exp\left(\frac{1}{\sigma_n^2}\begin{bmatrix}
\textbf{a}^H &
\textbf{b}^H
\end{bmatrix}\left( \begin{bmatrix}
 \textbf{C}& \textbf{E} \\ 
 \textbf{E}^H&\textbf{D} 
\end{bmatrix}^{-1}-\begin{bmatrix}
 \textbf{0}& \textbf{0} \\ 
 \textbf{0}&\textbf{D}^{-1} 
\end{bmatrix}\right)\begin{bmatrix}
\textbf{a}\\ 
\textbf{b}
\end{bmatrix}\right)\begin{matrix}
    \mathcal{H}_1\\[-1mm]\geq\\[-.5mm] <\\[-1mm] \mathcal{H}_0
    \end{matrix} \lambda_d.
\end{align}
\hrulefill
\end{figure*}
We define the test statistics $T \triangleq \ln(\Lambda)$, which is given as
\begin{align}
    T &= \frac{1}{\sigma_n^2}\left(\begin{bmatrix}
\textbf{a}^H &
\textbf{b}^H
\end{bmatrix}\left( \begin{bmatrix}
 \textbf{C}& \textbf{E} \\ 
 \textbf{E}^H&\textbf{D} 
\end{bmatrix}^{-1}-\begin{bmatrix}
 \textbf{0}& \textbf{0} \\ 
 \textbf{0}&\textbf{D}^{-1} 
\end{bmatrix}\right)\begin{bmatrix}
\textbf{a}\\ 
\textbf{b}
\end{bmatrix}\right)\nonumber\\
&+\ln(C_3).
    \end{align}
Finally, the decision is taken as follows
\begin{align}
    \hat{\mathcal{H}}= \left\{\begin{matrix}
\mathcal{H}_1 & \textrm{if}& T\geq \ln(\lambda_d),
\\ 
\mathcal{H}_0 &\textrm{if}& T <\ln(\lambda_d).
\end{matrix}\right. \label{detection}
\end{align}

\section{Power Allocation Algorithms}\label{section7}
In Section \ref{section6}, we have derived the MAPRT detector. It uses the already selected power allocation strategy. 
In this section, we first propose the power allocation algorithms for the case of integrated sensing and communication where we utilize communication symbols and optionally dedicated sensing symbols for target detection. Next, we describe two benchmarks algorithms: communication-centric and orthogonal sharing. The former represents a power allocation algorithm that only considers communication constraints and aim to minimize the total power consumption without taking into account the sensing requirement. The latter corresponds to the case when the orthogonal resources are shared orthogonally between sensing and communication.  

\subsection{Integrated Sensing and Communication Algorithms}
In this section, we maximize the sensing SINR, denoted by $\mathsf{SINR}_{\rm s}^{\rm isac}$, under the presumption that the target is present. 
The optimization problem can be cast as
\begin{subequations}\label{optimization0}
\begin{align} \label{optimization}
   \textbf{P1:}\quad   \underset{\boldsymbol{\rho}\geq \textbf{0}}{\textrm{maximize}} \quad &\mathsf{SINR}_{\rm s}^{\rm isac} \\ 
    \textrm{subject to} \quad 
   &\mathsf{SINR}^{(\rm dl)}_{i}  \geq \gamma_c, \quad i=1, \ldots, N_{\rm ue}\\
  & P_k \leq P_{\rm tx},\quad k=1,\ldots ,N_{\rm tx} 
\end{align}
\end{subequations}
where $\gamma_c$ is the minimum required SINR threshold to provide a certain SE to all the UEs and $P_{\rm tx}$ is the maximum transmit power per AP. 
 The sensing SINR is given as 
\begin{align} \label{gamma_s_main}
     \mathsf{SINR}_{\rm s}^{\rm isac}=\frac{\boldsymbol{\rho}^T \textbf{A} \boldsymbol{\rho}}{\tau_s MN_{\rm rx}\sigma_n^2+ \boldsymbol{\rho}^T\textbf{B}\boldsymbol{\rho} },
\end{align}
where
\begin{align}
    \textbf{A}=& \sum_{m=1}^{\tau_s}\textbf{D}_{\rm s}^H[m] \Bigg(\sum_{r=1}^{N_{\rm rx}}\sum_{k=1}^{N_{\rm tx}}\sum_{j=1}^{N_{\rm tx}}\sqrt{\beta_{r,k}\beta_{r,j}}\textbf{W}_k^H\textbf{a}^{*}(\varphi_{k},\vartheta_{k})\textbf{a}^H(\phi_{r},\theta_{r})\nonumber\\
 &\hspace{3mm}\times\textrm{cov}\left( \alpha_{r,j},\alpha_{r,k}\right)\textbf{a}(\phi_{r},\theta_{r})\textbf{a}^{T}(\varphi_{j},\vartheta_{j})  \textbf{W}_j\Bigg)\textbf{D}_{\rm s}[m],
\end{align}
and
\begin{align}\textbf{B}&=\sum_{m=1}^{\tau_s}\textbf{D}_{\rm s}^H[m] \left(\sum_{r=1}^{N_{\rm rx}}\sum_{k=1}^{N_{\rm tx}}\mathrm{tr}\left(\textbf{R}_{\mathrm{rx},(r,k)}\right)\textbf{W}_k^H \textbf{R}^{T}_{{\rm tx},(r,k)}\textbf{W}_k \right)\textbf{D}_{\rm s}[m].
\end{align}

The derivation of \eqref{gamma_s_main} is given in Appendix~\ref{app:proof_SNR}.
The matrices $\textbf{A}$ and $\textbf{B}$ are Hermitian symmetric. Since the entries of $\boldsymbol{\rho}$ are real valued, we have $\boldsymbol{\rho}^T\textbf{A}\boldsymbol{\rho}=\boldsymbol{\rho}^T\Re(\textbf{A})\boldsymbol{\rho}$ and $\boldsymbol{\rho}^T\textbf{B}\boldsymbol{\rho}=\boldsymbol{\rho}^T\Re(\textbf{B})\boldsymbol{\rho}$. Let us define $\textbf{A}_{r}=\Re(\textbf{A})$ and $\textbf{B}_{r}=\Re(\textbf{B})$. Then, the sensing SINR can be written as
\begin{align}
     \mathsf{SINR}_{\rm s}^{\rm isac}=\frac{\boldsymbol{\rho}^T \textbf{A}_r \boldsymbol{\rho}}{\tau_s MN_{\rm rx}\sigma_n^2+ \boldsymbol{\rho}^T\textbf{B}_r\boldsymbol{\rho} }.
\end{align}

By following an approach from \cite{benson2006fractional}, it can be shown that the optimization problem in \eqref{optimization0} is equivalent to 
 \begin{subequations}
\begin{align} 
    \underset{\boldsymbol{\rho}\geq \textbf{0}, \ t}{\textrm{maximize}} \quad &\frac{\boldsymbol{\rho}^T \textbf{A}_r \boldsymbol{\rho}}{t} \\
    \textrm{subject to} \quad 
    &\tau_s MN_{\rm rx}\sigma_n^2+ \boldsymbol{\rho}^T\textbf{B}_r\boldsymbol{\rho} \leq t \label{cona}\\
    &\mathsf{SINR}^{(\rm dl)}_{i}  \geq \gamma_c, \quad i=1, \ldots, N_{\rm ue}\label{conb}\\
  & P_k \leq P_{\rm tx},\quad k=1,\ldots ,N_{\rm tx}. \label{conc}
\end{align}
\end{subequations}

Let us define the objective function as $f\left(\boldsymbol{\rho},t\right) = \frac{\boldsymbol{\rho}^T \textbf{A}_r \boldsymbol{\rho}}{t}$ in terms of the  optimization variable $\begin{bmatrix}
 \boldsymbol{\rho}^T & t
 \end{bmatrix}^T$. The function $f\left(\boldsymbol{\rho},t\right)$ is a convex function. The proof is given in Appendix~C.
 
Utilizing the convexity of the objective function and utilizing the first-order Taylor approximation around the previous iterate $\left(\boldsymbol{\rho}^{(c-1)}, t^{(c-1)}\right)$, where $c$ is the iteration counter, the objective function is lower bounded as 
\begin{align}
   f\left(\boldsymbol{\rho}, t\right)&\geq f\left(\boldsymbol{\rho}^{(c-1)}, t^{(c-1)}\right)+ \left(\begin{bmatrix}
 \boldsymbol{\rho} \\ t
 \end{bmatrix}-\begin{bmatrix}
 \boldsymbol{\rho}^{(c-1)} \\ t^{(c-1)}
 \end{bmatrix}\right)^T\nonumber\\
 &\nabla f\left ( \boldsymbol{\rho}^{(c-1)},t^{(c-1)} \right ).
 \end{align}
 At iteration $c$, instead of the actual objective function, we maximize its lower bound with the concave-convex procedure. After removing the constant terms that are independent of the optimization variables in use, we obtain the linearized objective function to be maximized as
 \begin{align}
 &\begin{bmatrix}
 \boldsymbol{\rho} \\ t
 \end{bmatrix}^T \nabla f\left ( \boldsymbol{\rho}^{(c-1)},t^{(c-1)} \right )= \begin{bmatrix}
 \boldsymbol{\rho} \\ t
 \end{bmatrix}^T \begin{bmatrix}
     \frac{2\textbf{A}_r\boldsymbol{\rho}^{(c-1)}}{t^{(c-1)}} \\ -\frac{\left(\boldsymbol{\rho}^{(c-1)}\right)^T\textbf{A}_r\boldsymbol{\rho}^{(c-1)}}{\left(t^{(c-1)}\right)^2}
     \end{bmatrix} \nonumber\\
 &\hspace{10mm}=  \frac{2\boldsymbol{\rho}^T \textbf{A}_r \boldsymbol{\rho}^{(c-1)}}{t^{(c-1)}} - \frac{\left(\boldsymbol{\rho}^{(c-1)}\right)^T\textbf{A}_r\boldsymbol{\rho}^{(c-1)}}{\left(t^{(c-1)}\right)^2} t\nonumber\\
 & \hspace{10mm}= \left(2\boldsymbol{\rho}- \frac{t}{t^{(c-1)}}\boldsymbol{\rho}^{(c-1)}\right)^T \left(\frac{\textbf{A}_r \boldsymbol{\rho}^{(c-1)}}{t^{(c-1)}} \right).
\end{align}
 Now, the optimization problem can be rewritten as
 \begin{subequations} \label{eq:optimization1}
\begin{align} 
    \underset{\boldsymbol{\rho}\geq \textbf{0}, \  t}{\textrm{minimize}} \quad &-\left(2\boldsymbol{\rho}- \frac{t}{t^{(c-1)}}\boldsymbol{\rho}^{(c-1)}\right)^T \left(\frac{\textbf{A}_r \boldsymbol{\rho}^{(c-1)}}{t^{(c-1)}} \right) \\
    \textrm{subject to} \quad& \eqref{cona},\eqref{conb},\eqref{conc}.
\end{align}
\end{subequations}

The quadratic constraints in \eqref{cona} are convex  and the communication SINR constraints in \eqref{conb} can be rewritten as second-order cone (SOC) constraints in terms of $\boldsymbol{\rho}=[\sqrt{\rho_0} \ \ldots \sqrt{\rho_{N_{\rm ue}}}]^T$ as
\begin{align}
    &\left \Vert \begin{bmatrix}  a_{i,0}\sqrt{\rho_{0}} & a_{i,1}\sqrt{\rho_{1}} & \!\ldots\! &
    a_{i,i}\sqrt{\rho_{i}} & \!\ldots\! & a_{i,N_{\rm ue}}\sqrt{\rho_{N_{\rm ue}}} &
    \sigma_n
    \end{bmatrix} \right \Vert \nonumber\\
    &\leq \frac{\sqrt{\rho_i} b_{i}}{\sqrt{\gamma_c}}, \quad i=1, \ldots, N_{\rm ue}, \label{conb2}
\end{align}
where 
\begin{align}
& b_i = \left \vert \mathbb{E}\left\{\textbf{h}_{i}^H \textbf{w}_{i}\right\}\right\vert,\quad i= 1,\ldots,N_{\rm ue}\\
&a_{i,0} = \sqrt{\mathbb{E}\left\{\left \vert\tilde{\textbf{h}}_{i}^H \textbf{w}_0\right\vert^2\right\}},\quad i=1, \ldots,N_{\rm ue} \\
& a_{i,j} =\sqrt{\mathbb{E}\left\{\left \vert\textbf{h}_{i}^H \textbf{w}_{j}\right\vert^2\right\}},\quad i,j=1, \ldots,N_{\rm ue}, \quad j\neq i\\
& a_{i,i} =\sqrt{\mathbb{E}\left\{\left \vert\textbf{h}_{i}^H \textbf{w}_{i}\right\vert^2\right\}- b_i^2}, \quad i= 1,\ldots,N_{\rm ue}
\end{align}
from \eqref{sinr_i}. The third set of constraints in \eqref{conc} can also be rewritten in SOC form as
\begin{align}
\left \Vert \textbf{F}_k\boldsymbol{\rho}\right \Vert \leq \sqrt{P_{\rm tx}}, \quad k=1,\ldots,N_{\rm tx} \label{conc2}
\end{align}
where $\textbf{F}_k =\textrm{diag}\left( \sqrt{\mathbb{E}\left\{\Vert\textbf{w}_{0,k}\Vert^2\right\}}, \ldots,\sqrt{\mathbb{E}\left\{\Vert \textbf{w}_{N_{\rm ue},k}\Vert^2\right\}}\right)$ from \eqref{eq:Pk}.
Now, the optimization problem in \eqref{eq:optimization1} can be expressed as a concave-convex programming problem as  
\begin{subequations}\label{optimization2}
\begin{align} 
   &\underset{\boldsymbol{\rho}\geq \textbf{0}, \ t}{\textrm{minimize}} \quad  -\left(2\boldsymbol{\rho}- \frac{t}{t^{(c-1)}}\boldsymbol{\rho}^{(c-1)}\right)^T \left(\frac{\textbf{A}_r \boldsymbol{\rho}^{(c-1)}}{t^{(c-1)}} \right) \\
 &   \textrm{subject to} \quad  \text{\eqref{cona}, \eqref{conb2}, \eqref{conc2}}.
  \end{align}
  \end{subequations}
 
 This problem can be solved using the concave-convex procedure, whose steps are outlined in Algorithm~\ref{alg:fixed-point-uplink}. Under some mild conditions, this algorithm is guaranteed to converge to a stationary point of the problem \cite{lanckriet2009convergence}.

\begin{algorithm}[t]
	\caption{Concave-Convex Procedure for Power Allocation} \label{alg:fixed-point-uplink}
	\begin{algorithmic}[1]
		\State {\bf Initialization:} Set an arbitrary initial positive $\boldsymbol{\rho}^{(0)}$ and the solution accuracy $\epsilon>0$. Set the iteration counter to $c=0$ and $t^{(0)} =\tau_s MN_{\rm rx}\sigma_n^2+ \left(\boldsymbol{\rho}^{(0)}\right)^T\textbf{B}_r\boldsymbol{\rho}^{(0)}  $.
		 \State $c\leftarrow c+1$.
		\State Set $\boldsymbol{\rho}^{(c)}$ and $t^{(c)}$ to the solution of the convex problem in \eqref{optimization2},  where the previous iterate $\boldsymbol{ \rho}^{(c-1)}$ and $t^{(c-1)}$ are taken as constant.
        \State If $\left(2 \left(\boldsymbol{\rho^{(c)}-\rho^{(c-1)}}\right)-\frac{t^{(c)}-t^{(c-1)}}{t^{(c-1)}} \boldsymbol{\rho^{(c-1)}}\right)^T \textbf{A}_r\frac{\boldsymbol{\rho}^{(c-1)}}{t^{(c-1)}} \leq \epsilon $, terminate the iterations. Otherwise return to Step 2.
		\State {\bf Output:} The transmit power coefficients  $\boldsymbol{\rho}^{(c)}$.
	    \end{algorithmic}
\end{algorithm}
\subsection{Communication-Centric Algorithm}\label{sec:CC algo}
In this algorithm, sensing is performed simultaneously with the downlink communication but the power allocation is optimized to minimize the transmit power consumption taking into account only communication constraints. The optimization problem can be cast as
\begin{subequations}\label{optimization_P2}
\begin{align} \label{optimization}
 \textbf{P2:}\quad   \underset{\boldsymbol{\rho}\geq \textbf{0}}{\textrm{minimize}} \quad & P_{\rm total}=\sum_{k=1}^{N_{\rm tx}}P_k \\ 
    \textrm{subject to} \quad &\mathsf{SINR}^{(\rm dl)}_{i}  \geq \gamma_c, \quad i=1, \ldots, N_{\rm ue}\\
  & P_k \leq P_{\rm tx},\quad k=1,\ldots ,N_{\rm tx}. 
\end{align}
\end{subequations}
\subsection{Orthogonal Sharing Algorithm}\label{sec: OS algo}
In this algorithm, sensing and communication are conducted separately where we divide the resources (symbols) between sensing and communication in an orthogonal manner over one coherence block.
Since we have separate sensing and communication, the sensing and communication power coefficients should be optimized separately. In this case, the optimization problem for communication would be the same as \eqref{optimization_P2}. But the achievable SE of UE $i$ becomes
\begin{align} \label{eq:SE_i_os}
    \mathsf{SE}^{(\rm dl,os)}_i = \frac{\tau_c-\tau_p-\tau_s}{\tau_c} \log_2\left(1+\mathsf{SINR}^{(\rm dl)}_i\right),
\end{align}
where $\tau_s \leq \tau_c-\tau_p$.
The optimization problem for sensing can be written as
\begin{subequations}\label{optimization_P1}
\begin{align} \label{optimization}
 \textbf{P3:}\quad   \underset{\rho_{0,k}\geq 0}{\textrm{maximize}} \quad &\mathsf{SINR}_{\rm s}^{\rm os}\\ 
    \textrm{subject to} \quad 
  & \rho_{0,k}\mathbb{E}\left\{\Vert \textbf{w}_{0,k} \Vert^2\right\} \leq P_{\rm tx},\quad \forall k
\end{align}
\end{subequations}
which aims to maximize the sensing SINR while the power consumption per AP does not exceed $P_{\rm tx}$. 

In this scenario, communication symbols are not contributing to the sensing performance. Therefore, the received signal for sensing at receive AP $r$ is 
\begin{align}\label{y_rPrim_benchmark}
          \textbf{y}_r[m] =& \sum_{k=1}^{N_{\rm tx}} \alpha_{r,k}  \sqrt{\beta_{r,k}}\textbf{a}(\phi_{r},\theta_{r})\textbf{a}^{T}(\varphi_{k},\vartheta_{k})\sqrt{\rho_{0,k}}\textbf{w}_{0,k}s_0[m]\nonumber\\
          &+\sum_{k=1}^{N_{\rm tx}}  \textbf{H}_{r,k}\sqrt{\rho_{0,k}}\textbf{w}_{0,k}s_0[m]+\textbf{n}_r[m],
\end{align}
and sensing SINR is obtained as
\begin{align} \label{gamma_s_main_bechmark}
     \mathsf{SINR}_{\rm s}^{\rm os}=\frac{\sum_{k=1}^{N_{\rm tx}}\sum_{j=1}^{N_{\rm tx}}\sqrt{\rho_{0,k}}\, a_{k,j}\sqrt{\rho_{0,j}}}{\tau_s MN_{\rm rx}\sigma_n^2+ \sum_{k=1}^{N_{\rm tx}}\rho_{0,k}\, b_k},
\end{align}
where 
\begin{align}
    &a_{k,j}= \sum_{m=1}^{\tau_s} \vert s_0[m] \vert^2 \Bigg(\sum_{r=1}^{N_{\rm rx}}\sqrt{\beta_{r,k}\beta_{r,j}}\textrm{cov}\left( \alpha_{r,j},\alpha_{r,k}\right)\nonumber\\
 &\hspace{3mm}\times\textbf{w}_{0,k}^H\textbf{a}^{*}(\varphi_{k},\vartheta_{k})\textbf{a}^H(\phi_{r},\theta_{r})\textbf{a}(\phi_{r},\theta_{r})\textbf{a}^{T}(\varphi_{j},\vartheta_{j})  \textbf{w}_{0,j}\Bigg) ,
\end{align}
and
\begin{align}
b_{k}&=\sum_{m=1}^{\tau_s} \vert s_0[m] \vert ^2\left(\sum_{r=1}^{N_{\rm rx}}\mathrm{tr}\left(\textbf{R}_{\mathrm{rx},(r,k)}\right)\textbf{w}_{0,k}^H \textbf{R}^{T}_{{\rm tx},(r,k)}\textbf{w}_{0,k} \right).
\end{align}

 \section{Numerical Results}\label{section8}
In this section, we provide numerical results to quantify the trade-off between communication and sensing using the proposed MAPRT detector and power allocation algorithm. A total area of 500\,m $\times$ 500\,m is considered within which UEs are randomly generated. In each sensing block, the target's location is fixed and known, but it can be anywhere within a 15\,m $\times$ 15\,m sensing hotspot region located in the center of the area. The interference $\rho_0\normalfont\mathbb{E}\{\vert \tilde{\textbf{h}}_i^H \textbf{w}_0\vert^2\}$ created by the sensing beams to the UEs in \eqref{sinr_i} is computed by averaging  over 1000\, target locations in this region. We assume that the $N_{\rm tx}+N_{\rm rx}$ APs are also uniformly generated in the network area, where $N_{\rm tx}=16$ and the $N_{\rm rx}=2$ closest APs to the sensing hotspot area are selected as the sensing receivers. All APs are fixed during the simulation. The 2D locations of all the APs and the target are illustrated in Fig~\ref{locations}. We consider a total of $N_{\rm ue}=8$ UEs.

The maximum transmit power per AP is $1$\,W and the uplink pilot transmission power is $0.2$\,W.
The path loss for the communication channels and the target-free channels are modeled by the 3GPP Urban Microcell model defined in \cite[Table B.1.2.1-1]{3gpp2010further} with the difference that for the latter the channel gains are multiplied by an additional scaling parameter, $\mathfrak{s}\in (0,1)$, to suppress the known parts of the target-free channels due to LOS and permanent obstacles. Hence, the higher the value of $\mathfrak{s}$ is, the stronger the clutter power is. On the other hand, the sensing channel gains are computed by \eqref{eq:radar_range_equation}. The carrier frequency, the bandwidth, and the noise variance are set to $1.9$\,GHz, $20$\,MHz, and $-94$\,dBm, respectively.   

The spatial correlation matrices for the communication channels are generated by using the local scattering model in \cite[Sec.~2.5.3]{cell-free-book}, while the normalized RCS of the target is modeled by the Swerling-I model where $\alpha_{r,k} \sim \mathcal{CN}(0,1)$. For simplicity, we assume that the RCS values are independent and all RCS variances are the same, i.e., $\sigma_{r,k}^2= \sigma_{\rm rcs}^2$ for each pair of transmit AP $k$ and receive AP $r$.

The communication SINR threshold is 3\,dB for each UE.
The detection threshold, $\lambda_d$ in \eqref{detection}, is determined based on the false alarm probability $P_{\rm fa}$ of either $0.1$ or $0.01$, which are relevant for radar applications \cite{guruacharya2020map} and can be improved by using specific techniques \cite{richards2010principles}.  We use the MAPRT detector over one sensing block of $\tau_s$ symbols. Each coherence block contains $\tau_c= 200$ symbols with  $\tau_p = 10$ pilot symbols for uplink channel estimation. The sensing block, $\tau_s$, is set to $10$ and the scaling parameter $\mathfrak{s}$ is set to 0.3 unless otherwise stated. Note that the number of required samples used for sensing depends on several factors, such as the angular range and time interval for beam scanning in the sensing hotspot region \cite{richards2010principles}. When the sensing channel gain is lower due to the undesired atmospheric conditions or a smaller false alarm probability is required, more symbols would be needed to achieve the same detection probability. In this paper, the parameters are selected to cover a large range of RCS and detection probability values for a comprehensive comparison between several methods. 
\begin{figure}[tbp]
\centerline{\includegraphics[trim={2mm 0mm 9mm 2mm},clip,width=0.9\linewidth]{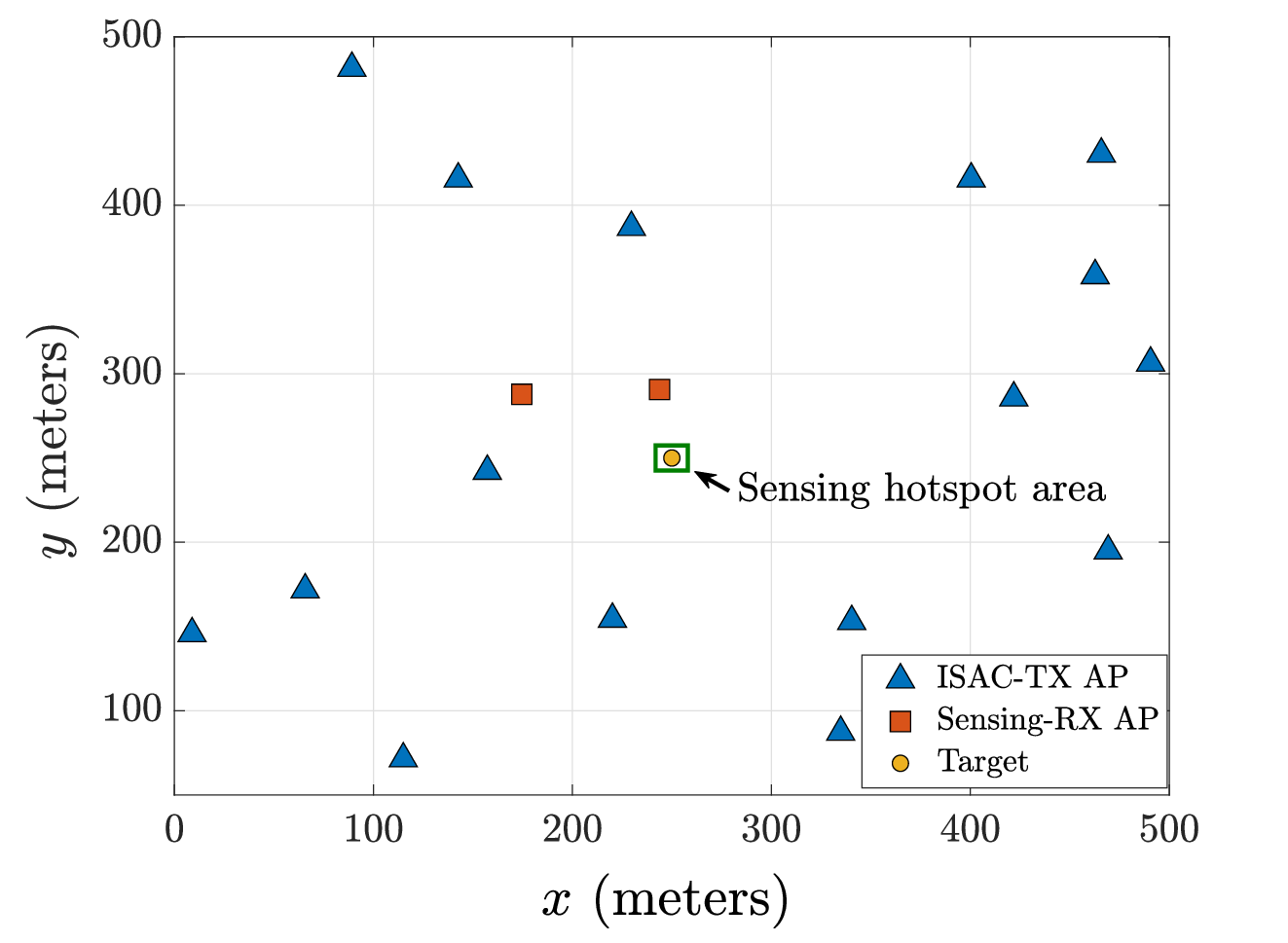}}
\caption{The 2D locations of the APs and the target.}\label{locations} 
\end{figure}

\begin{table}[]
\centering
	\caption{List of simulation parameters.}  \label{tab:simulation}
    
    \begin{tabular}{|c|c|c|c|}\hline
        $N_{\rm tx}, N_{\rm rx}, N_{\rm ue}$ & 16,\,2\,,8\,  & $P_{\rm ul}, P_{\rm tx}$ & 0.2\,W, 1\,W \\
        \hline
         $\tau_c, \tau_p$& 200,\,10 & $P_{\rm fa}$ & 0.1,\,0.01\\
         \hline
         $\gamma_c$&3\,dB  & $\sigma_n^2$ & -94\,dBm\\
         \hline
    \end{tabular}
    
\end{table}

The detection probability, $P_{\rm d}$, is computed by averaging over 100 random UE locations and channel realizations. For each set of UE locations, the expectations in \eqref{sinr_i} are obtained by taking average over 1000 random channel realizations. In each setup, $\frac{10^2}{P_{\rm fa}}$ random RCS and $\frac{10^2}{P_{\rm fa}}\times\tau_s$ random noise realizations are taken into consideration for accurate estimation of $P_{\rm d}$. The simulation parameters are listed in Table~\ref{tab:simulation}.

To evaluate the efficiency of the proposed power allocation algorithm, a communication-centric approach and an orthogonal sharing strategy, as described in Sections~\ref{sec:CC algo} and \ref{sec: OS algo}, are utilized. 
Therefore, four algorithms are compared in this section:  i) the  communication-centric (\textit{comm.-centric}); ii) the proposed ISAC power allocation algorithm ii.a) without additional sensing symbols (\textit{ISAC}) and ii.b) with dedicated additional sensing symbols (\textit{ISAC+S}); and iii) the orthogonal sharing algorithm (\textit{OS}). 

\subsection{Sensing-Communication SE Region}
Fig.~\ref{fig:SE} illustrates the sensing-communication SE region with the OS algorithm as well as those provided by the ISAC algorithms. The desired point is where both the sensing SE and communication SE are at their maximum available values, which is termed \textit{Utopia} in Fig.~\ref{fig:SE}. Notably, the OS algorithm exhibits a trade-off between the sensing and communication SE such that increasing one results in a decrease in the other. This is because the symbols in a coherence block are divided for the sensing and communication purposes, and thus the pre-log factors in \eqref{eq:SE_i_os} and \eqref{eq: SE_S} are changed. Conversely, the ISAC algorithms utilize all the available symbols for downlink (i.e., $\tau_c-\tau_p$) which results in the maximum communication SE under a given communication SINR threshold. It comes with the cost of a lower sensing SE than the \textit{Utopia} point because the sensing should share the transmit power with the communication. Nevertheless, both \textit{ISAC} and \textit{ISAC+S} place themselves beyond the sensing-communication SE region of the OS algorithm, demonstrating the effectiveness of the integrated operation of sensing and communication. Furthermore, \textit{ISAC+S} outperforms \textit{ISAC} algorithm by providing higher sensing SE while maintaining the maximum communication SE.

\begin{figure}[t!]

    \centering
    \includegraphics[trim={2mm 0mm 9mm 2mm},clip,width=0.9\linewidth]{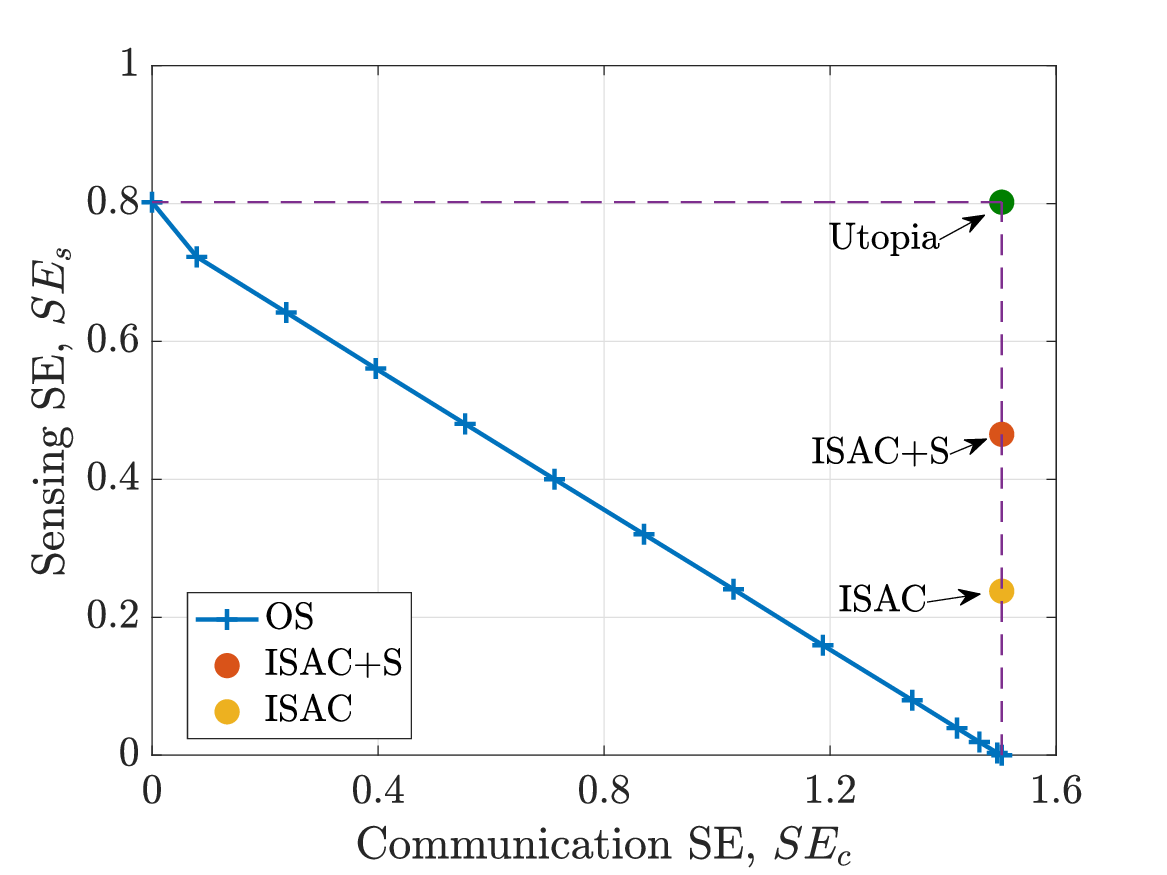}
    \caption{Sensing-communication SE region for $\sigma_{\rm rcs}^2 = -10$\,dBsm, $P_{\rm fa}=0.1$, and scaling parameter $\mathfrak{s}=0.01$.}
    \label{fig:SE}

\end{figure}

\subsection{Impact of the Sensing Blocklength}

\begin{figure}[t!]
		\centering
		\includegraphics[trim={2mm 0mm 9mm 4mm},clip,width=0.9\linewidth]{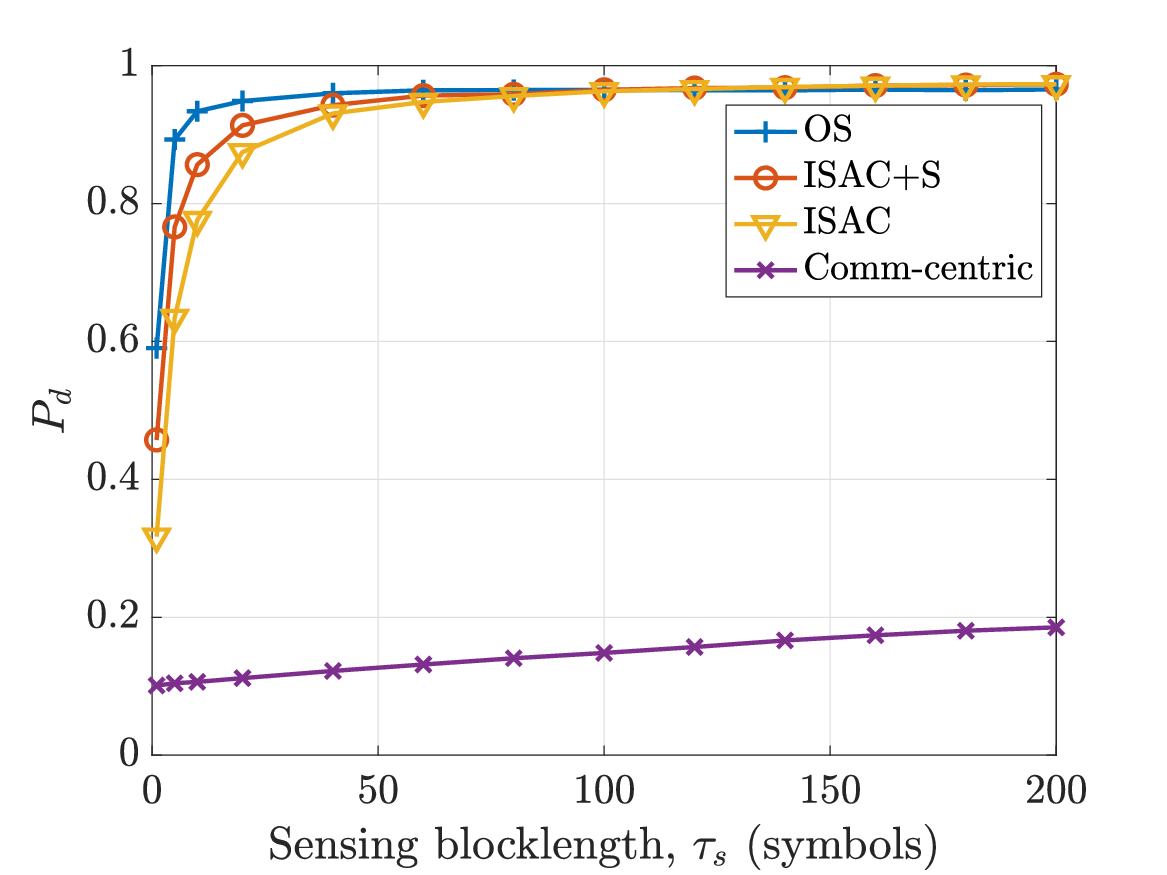}
			\caption{$P_{\rm d}$ vs. sensing blocklength $\tau_s$  for $\sigma_{\rm rcs}^2 = -10$\,dBsm, $P_{\rm fa}=0.1$, and scaling parameter $\mathfrak{s}=0.01$.} \label{Pd_sensingDuraion}
	\end{figure}
Fig.~\ref{Pd_sensingDuraion} shows the detection probability versus the sensing blocklength $\tau_s$. We observe that with all algorithms except \textit{comm-centric}, detection probability significantly grows by increasing the sensing blocklength while it becomes stable after $\tau_s = 80$ symbols. According to the results, the \textit{OS} algorithm outperforms the two other algorithms in low sensing blocklength values (i.e., $\tau_s <80$). This is because in \textit{OS} the power is only used to maximize the sensing SINR while in ISAC algorithms maximizing sensing SINR is limited due to communication constraints. However, the performance of the proposed ISAC algorithms without/with additional sensing beams (i.e., \textit{ISAC} and \textit{ISAC+S}) converges to the performance of \textit{OS} algorithm when $\tau_s \geq 80$ symbols.
 
\subsection{Impact of RCS Variance and False Alarm Probability Threshold}
\begin{figure}[t!]
	\begin{minipage}[t]{1\linewidth}
		\begin{center}
			\includegraphics[trim={2mm 0mm 9mm 4mm},clip,width=0.9\textwidth]{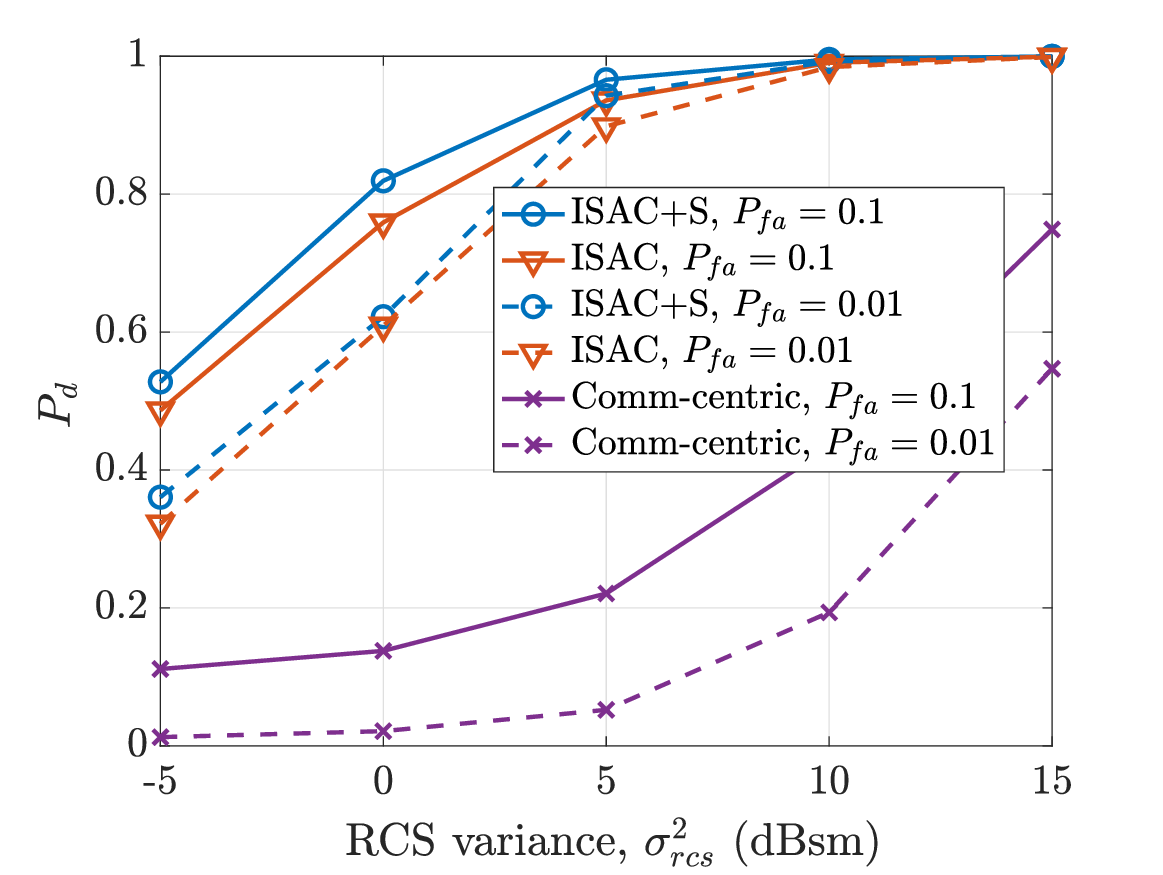}
			\caption{$P_{\rm d}$ vs. RCS variance for $\mathfrak{s}=0.3$, $P_{\rm fa}=0.1$,\,$0.01$.} \label{PD_RCS}
		\end{center}
	\end{minipage}
	\begin{minipage}[t]{1\linewidth}
		\begin{center}
			\includegraphics[trim={2mm 0mm 9mm 2mm},clip,width=0.9\textwidth]{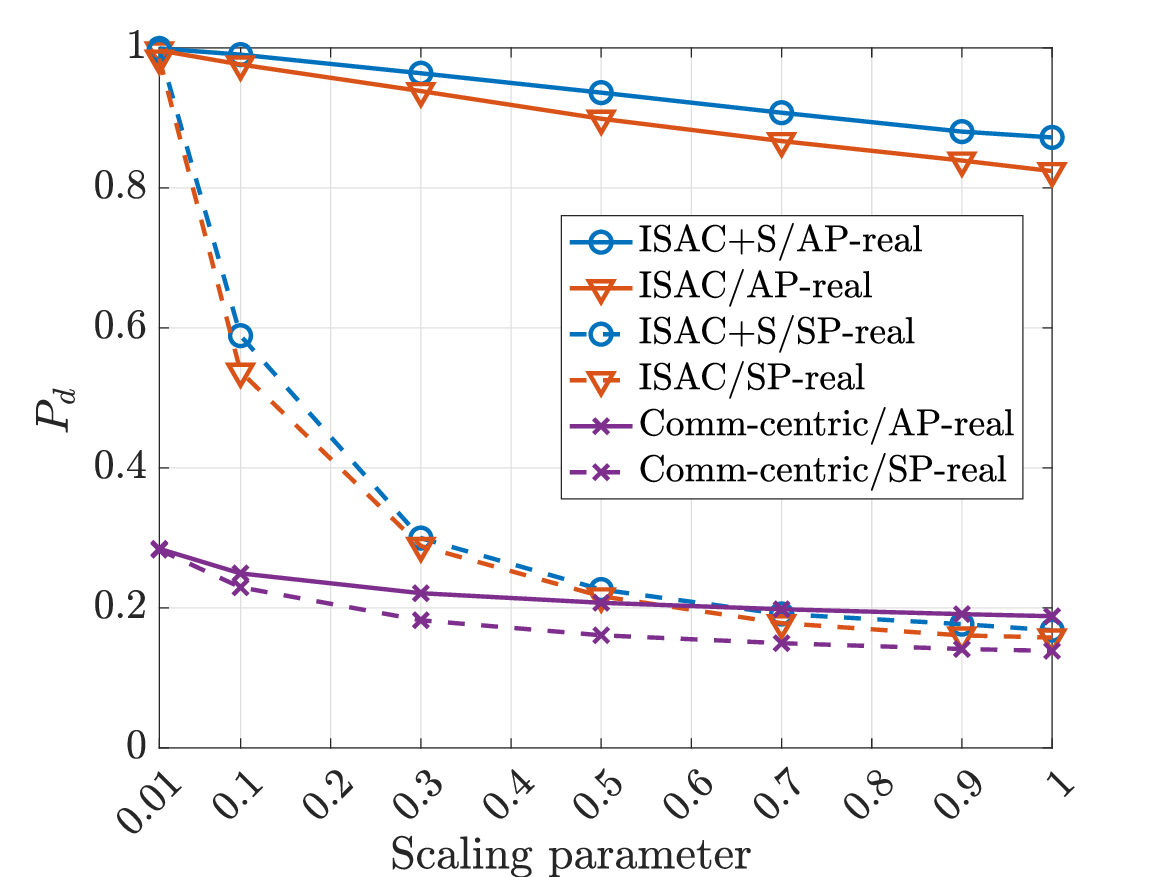}
			\caption{$P_{\rm d}$ vs. scaling parameter $\mathfrak{s}$ for
 $\sigma_{rcs}^2= 5$\,dB$\textrm{sm}$ and $P_{\rm fa}=0.1$.} \label{Pd_scalingParam}
		\end{center}
	\end{minipage} 
\end{figure}

Fig. \ref{PD_RCS} depicts the detection probability versus the variance of the RCS for false alarm probabilities of 0.1 and 0.01. The results suggest that allowing a higher false alarm probability threshold leads to increased detection probability. 
For both false alarm probabilities 0.1 and 0.01, the proposed \textit{ISAC} and \textit{ISAC+S} algorithms can achieve a detection probability close to 1 when the RCS variance is $\geq$ 10\,$\rm{dBsm}$ (``dBsm'' is the unit of RCS variance expressed in decibels relative to a square meter). In contrast, the \textit{comm-centric} algorithm fails to attain a $P_{\rm d}$ above 0.8 due to neglecting the sensing requirements. The detection probability also tends to increase as the RCS variance increases. In addition, the \textit{ISAC+S} algorithm supports lower RCS variances, about 1\,dB lower than that of \textit{ISAC}, for a specific $P_{\rm d}$.
%
%
\subsection{Impact of Target-Free Channels and Target Detector}
In this subsection, we aim to provide a comprehensive study on the impact of the target-free channels on the detection probability. First, we consider two scenarios: \textit{realistic} and \textit{idealistic}. The former corresponds to the described system model with target-free channels that cannot be canceled at the receive APs, while the latter is for the case that target-free channels can be completely suppressed. Moreover, we compare the performance of the proposed MAPRT detector with a simpler detector proposed in our previous work \cite{behdad2022power}. 
The MAPRT detector presented in 
\cite{behdad2022power} represents the scenario when the target-free channels are disregarded and is referred to as \textit{simple processing (SP)} detector here. However, the detector proposed in this paper is developed based on the presence of the target-free channels and is referred to as \textit{advanced processing (AP)} detector. Therefore, according to the combination of the \textit{realistic/idealistic} scenarios and the MAPRT detectors, we will have three cases as: a) \textit{AP-real} representing a realistic scenario with \textit{AP} detector; b) \textit{SP-real} representing a realistic scenario with \textit{SP} detector; and c) \textit{SP-ideal} representing an idealistic scenario with \textit{SP} detector. It is worth mentioning that in an idealistic scenario, the utilization of \textit{AP} detectors becomes unnecessary, as there should be no interference from target-free channels.

The effect of the unknown target-free channels on the detection probability is investigated in Fig. \ref{Pd_scalingParam}, which illustrates the detection probability versus scaling parameter, $\mathfrak{s}$, for \textit{SP} and \textit{AP} detectors. Recall that the scaling parameter $\mathfrak{s}$ affects the amount of interference from the unknown part of the target-free channels. When $\mathfrak{s}\!=\!1$, no information about the target-free channels is available, which makes them impossible to be suppressed. As expected, reducing $\mathfrak{s}$ leads to an improved detection probability, as the interference power for sensing signals decreases with a lower value of $\mathfrak{s}$. More importantly, the efficiency of \textit{AP} detector that exploits the statistics of the target-free channels in \textit{realistic} scenarios has been shown since there is a significant gap between the performance of both proposed algorithms, \textit{ISAC} and \textit{ISAC+S}, with \textit{AP} and \textit{SP}. The results also show that the proposed scheme is quite robust to the power increase corresponding to the unknown part of the target-free channel. 

\begin{figure}[t!]
	
	\begin{minipage}[t]{1\linewidth}
		\begin{center}
			\includegraphics[trim={2mm 0mm 9mm 4mm},clip,width=0.9\textwidth]{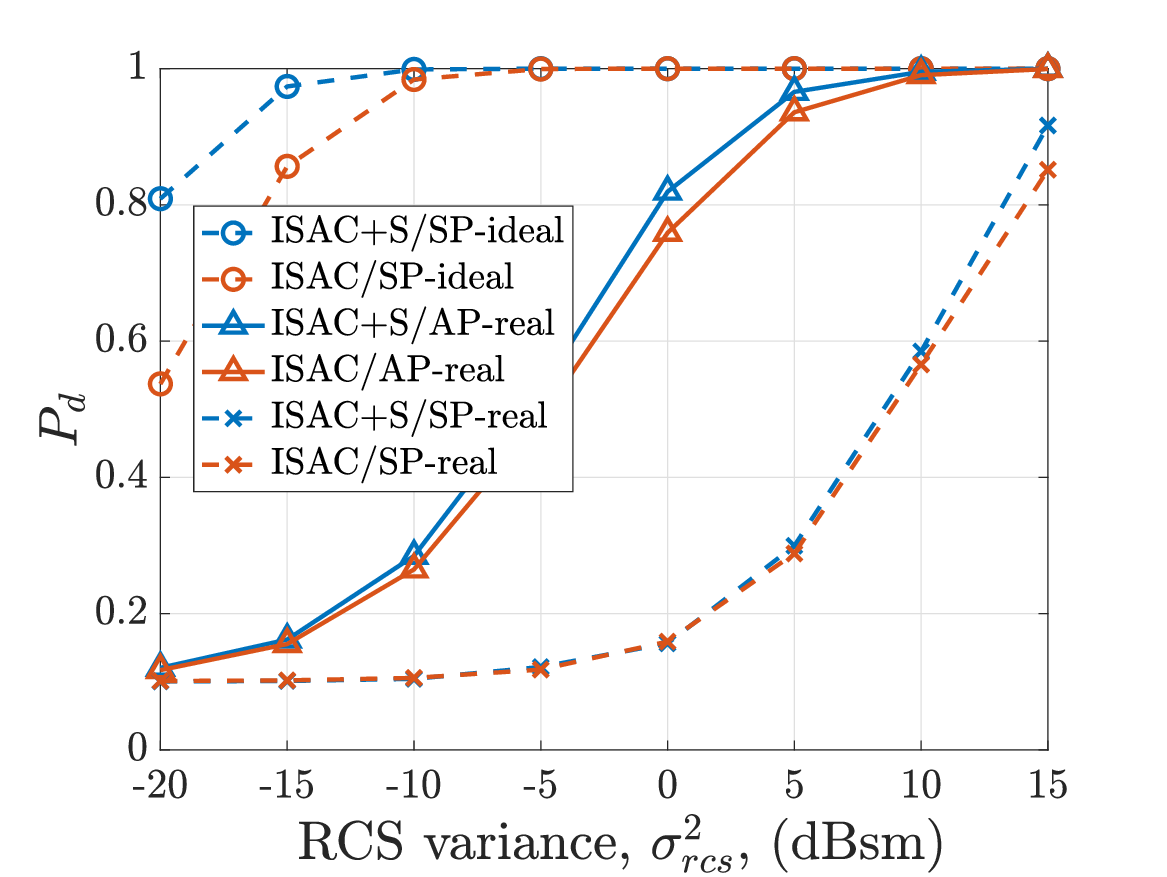}
			\caption{$P_{\rm d}$ vs. RCS variance for
 $P_{\rm fa}=0.1$ and $\mathfrak{s}=0.3$.} \label{benchmark}
		\end{center}
	\end{minipage} 
\end{figure}

%

%
Finally, the efficiency of the proposed algorithms in both \textit{realistic} and \textit{idealistic} scenarios, with \textit{SP} and \textit{AP} detectors is investigated in terms of RCS variance in Fig. \ref{benchmark}. The results demonstrate that \textit{SP-real} provides a lower bound, while the performance of \textit{SP-ideal} serves as an upper bound. In the realistic scenario, there is a substantial gap between the use of \textit{SP} and \textit{AP} detectors. However, for high RCS variances, \textit{SP} detector can also reach high performance. Moreover, a better performance of \textit{SP} detector is expected  for lower scaling parameters (see Fig.~\ref{Pd_scalingParam}). This implies that when the target has good reflectivity and the interference from unknown target-free channels is low, we can ignore the target-free channels and use the \textit{SP} detector, which is more computationally friendly than the \textit{AP} detector.
 \vspace{-1.5mm}
\section{Conclusion} \label{section9}
An ISAC framework in a cell-free massive MIMO is proposed for downlink communication with imperfect CSI, as well as, multi-static target detection in the presence of clutter. A MAPRT detector is developed to detect the target with a known location but unknown RCS using centralized processing. A power allocation algorithm is proposed to maximize the sensing SINR while satisfying both communication and per-AP power constraints to improve the ISAC performance. Furthermore, the sensing SE is defined as a novel sensing performance metric in ISAC that highlights the importance of resource utilization in ISAC. 
Numerical results show that, compared to a fully communication-centric algorithm, and in the presence of clutter, a significant improvement in detection probability can be achieved by the proposed algorithm with and without additional sensing beams. However, the use of additional sensing beams further supports lower RCS variances. Moreover, a significant gap has been observed between the performance of the proposed algorithms, which exploit the statistics of the target-free channels, and the one in which these channels are ignored in the processing. However, this gap becomes smaller when the clutter power reduces, and the RCS variance is relatively high facilitating computationally simpler target detection. Furthermore, we have observed that ISAC approaches place themselves beyond the sensing-communication SE region of an orthogonal resource-sharing approach, which emphasizes the effectiveness of the integrated operation of sensing and communication. 
\appendix
\section*{Appendix A: Proof of Lemma ~\ref{lemma:H0}}\label{appendix:A}
\subsection{Optimization under $\mathcal{H}_0$}
Given that the hypothesis $\mathcal{H}_0$ is true, we have $\textbf{y}_{\tau}= \textbf{h}_{\tau}+\textbf{n}_{\tau}$.  In order to solve the optimization problem under $\mathcal{H}_0$, we first derive the conditional  PDF of $\textbf{y}_{\tau}$ given the target-free channel vector $\boldsymbol{\mathfrak{h}}$ and the hypothesis $\mathcal{H}_0$ and PDF of $\boldsymbol{\mathfrak{h}}$ in \eqref{maxUnderH0} as
\begin{align}  &p\left(\textbf{y}_{\tau}|\boldsymbol{\mathfrak{h}},\mathcal{H}_0\right)  = p\left(\textbf{n}_{\tau}= \textbf{y}_{\tau}-\textbf{h}_{\tau}|\boldsymbol{\mathfrak{h}}\right)  \nonumber\\
&= C_1 \exp{\left(-\sum_{m=1}^{\tau_s}\frac{\left\Vert\textbf{y}[m]- \textbf{X}[m]\boldsymbol{\mathfrak{h}} \right\Vert ^2}{\sigma_n^2}\right)}, \\
&p\left(\boldsymbol{\mathfrak{h}}\right| \mathcal{H}_0) =  p\left(\boldsymbol{\mathfrak{h}}\right) =  C_2 \exp{\left(-\boldsymbol{\mathfrak{h}}^H\textbf{R}^{-1}\boldsymbol{\mathfrak{h}}\right)}, \label{p_h}  
\end{align}
respectively, where $C_1 = \frac{1}{\left(\pi \sigma_n^2\right)^{ MN_{\rm rx}\tau_s}}$ and $C_2 =\frac{1}{\pi^{M^2N_{\rm tx}N_{\rm rx}
    }\det\left(\textbf{R}\right)}$. Note that $\boldsymbol{\mathfrak{h}}$ is independent of the hypothesis $\mathcal{H}_0$.
Therefore, the optimization under $\mathcal{H}_0$ can be rewritten as 
\begin{align}
\Lambda_{\rm den}&= \max_{\boldsymbol{\mathfrak{h}}} \, C_1 C_2 \exp{\left(-\sum_{m=1}^{\tau_s}\frac{\Vert\textbf{y}[m]- \textbf{X}[m] \bm{\mathfrak{h}} \Vert ^2}{\sigma_n^2}-\bm{\mathfrak{h}}^H\textbf{R}^{-1}\bm{\mathfrak{h}}\right)}\nonumber\\
\label{min_Under_H0}
    &=C_1 C_2 \exp{\left(-\frac{1}{\sigma_n^2}\min_{\boldsymbol{\mathfrak{h}}} \,f_{\mathcal{H}_0}(\boldsymbol{\mathfrak{h}}) \right)},
\end{align}
where
\begin{align} \label{eq:fH0}
   f_{\mathcal{H}_0}(\boldsymbol{\mathfrak{h}})=
    &\bm{\mathfrak{h}}^H \left(\sum_{m=1}^{
    \tau_s}\textbf{X}^H[m] \textbf{X}[m] + \sigma_n^2 \textbf{R}^{-1}\right)\bm{\mathfrak{h}} \nonumber\\
    &-2 \Re \left(\bm{\mathfrak{h}} ^H\sum_{m=1}^{\tau_s} \textbf{X}^H[m] \textbf{y}[m]\right)+\sum_{m=1}^{\tau_s} \Vert \textbf{y}[m]\Vert^2\nonumber\\
     =& \bm{\mathfrak{h}}^H \textbf{D} \bm{\mathfrak{h}}-\bm{\mathfrak{h}}^H \textbf{b}-\textbf{b}^H\bm{\mathfrak{h}}+F,
     \end{align}
where $\textbf{D}$, $\textbf{b}$, and $F$ are given in \eqref{eq:C-D}, \eqref{eq:a-b}, and \eqref{eq:e-F}, respectively.
Then, the estimate of the vectorized target-free channel $\bm{\mathfrak{h}}$ that minimizes the quadratic convex objective function $ f_{\mathcal{H}_0}(\boldsymbol{\mathfrak{h}})$ in \eqref{min_Under_H0} is obtained as $\hat{\bm{\mathfrak{h}}} =\textbf{D}^{-1}\textbf{b}$. 
Finally, substituting the optimally estimated $\hat{\bm{\mathfrak{h}}}$ into $ f_{\mathcal{H}_0}(\boldsymbol{\mathfrak{h}})$ in \eqref{min_Under_H0}, we obtain the denominator of the likelihood ratio in \eqref{eq:likelihood-MAPRT} as in \eqref{Lambda_den}. 
\subsection{Optimization under $\mathcal{H}_1$}
Given that the hypothesis $\mathcal{H}_1$ is true, we have $\textbf{y}_{\tau}=
  \textbf{g}_{\tau}+\textbf{h}_{\tau}+\textbf{n}_{\tau}$. Then, the conditional  PDF of $\textbf{y}_{\tau}$ given $\boldsymbol{\alpha}$, $\boldsymbol{\mathfrak{h}}$ and the hypothesis $\mathcal{H}_1$ is obtained as
\begin{align}
    &p\!\left(\textbf{y}_{\tau}|\boldsymbol{\alpha},\boldsymbol{\mathfrak{h}},\mathcal{H}_1\right) \! =\! p\!\left(\textbf{n}_{\tau}\!= \!\textbf{y}_{\tau}\!-\!\textbf{g}_{\tau}\!-\!\textbf{h}_{\tau}|\boldsymbol{\alpha},\boldsymbol{\mathfrak{h}}\right)\!\nonumber\\
    &= \!C_1\!\exp{\left(-\!\sum_{m=1}^{\tau_s}\frac{\Vert\textbf{y}[m]-\textbf{G}[m]\boldsymbol{\alpha}-\textbf{X}[m]\boldsymbol{\mathfrak{h}}\!\Vert ^2}{\sigma_n^2}\right)},
\end{align}
where $\boldsymbol{\mathfrak{h}}$ and $\boldsymbol{\alpha}$ are independent of the hypothesis $\mathcal{H}_1$. Therefore, 
$ p\left(\boldsymbol{\mathfrak{h}}\right| \mathcal{H}_1) =  p\left(\boldsymbol{\mathfrak{h}}\right)$ as given in \eqref{p_h} and
the conditional PDF of $\boldsymbol{\alpha}$ given the hypothesis $\mathcal{H}_1$ is equal to its PDF given as  
\begin{align}
    p\left(\boldsymbol{\alpha}\right| \mathcal{H}_1) & = p\left(\boldsymbol{\alpha}\right) = C_3   \exp{\left(-\boldsymbol{\alpha}^H\textbf{R}_{\rm rcs}^{-1}\boldsymbol{\alpha}\right)},
\end{align}
where $C_3 =\frac{1}{\pi^{N_{\rm tx}N_{\rm rx}
    }\det\left(\textbf{R}_{\rm rcs}\right)}$. 
Now, we can rewrite \eqref{maxUnderH1} as 
\begin{align}\label{optH1}
 \Lambda_{\rm num}= C_1 C_2 C_3 \exp\left(-\frac{1}{\sigma_n^2} \min_{\boldsymbol{\alpha},\boldsymbol{\mathfrak{h}}} f_{\mathcal{H}_1}(\boldsymbol{\alpha},\boldsymbol{\mathfrak{h}})\right),
\end{align}
where
\begin{align}\
    f_{\mathcal{H}_1}(\boldsymbol{\alpha},\boldsymbol{\mathfrak{h}})=& \sum_{m=1}^{\tau_s}\Vert\textbf{y}[m]- \textbf{G}[m]\boldsymbol{\alpha}-\textbf{X}[m]\boldsymbol{\mathfrak{h}} \Vert ^2+\boldsymbol{\alpha}^H \sigma_n^2\textbf{R}_{\rm rcs}^{-1}\boldsymbol{\alpha}\nonumber\\
    +& \boldsymbol{\mathfrak{h}}^H\sigma_n^2\textbf{R}^{-1}\boldsymbol{\mathfrak{h}}\nonumber \\
    =&\sum_{m=1}^{\tau_s} \Vert \textbf{y}[m]\Vert^2 -2 \Re \left( \boldsymbol{\alpha}^H \sum_{m=1}^{\tau_s} \textbf{G}^H[m] \textbf{y}[m] \right) \nonumber\\
    & -2\Re\left(\boldsymbol{\mathfrak{h}}^H \sum_{m=1}^{\tau_s} \textbf{X}^H[m] \textbf{y}[m]\right) \nonumber \\
    &  +\boldsymbol{\alpha}^H\left(\sum_{m=1}^{\tau_s} \textbf{G}^H[m]\textbf{G}[m]+ \sigma_n^2 \textbf{R}_{\rm rcs}^{-1}\right)\boldsymbol{\alpha} \nonumber\\
    & +\boldsymbol{\mathfrak{h}}^H\left(\sum_{m=1}^{\tau_s} \textbf{X}^H[m]\textbf{X}[m]+ \sigma_n^2 \textbf{R}^{-1}\right)\boldsymbol{\mathfrak{h}} \nonumber\\
    & + 2\Re\left(\boldsymbol{\alpha}^H\sum_{m=1}^{\tau_s} \textbf{G}^H[m]\textbf{X}[m]\boldsymbol{\mathfrak{h}}\right). \label{fH1}
\end{align}
The objective function in  \eqref{fH1} can be rewritten as
\begin{align}\label{fH1Matrix}
    f_{\mathcal{H}_1}(\boldsymbol{\alpha},\boldsymbol{\mathfrak{h}})=&  \begin{bmatrix}
\boldsymbol{\alpha}^H &
\boldsymbol{\mathfrak{h}}^H
\end{bmatrix}\begin{bmatrix}
 \textbf{C}& \textbf{E} \\ 
 \textbf{E}^H&\textbf{D} 
\end{bmatrix}\begin{bmatrix}
\boldsymbol{\alpha} \\
\boldsymbol{\mathfrak{h}}
\end{bmatrix}\nonumber\\
&-2 \Re \left(\begin{bmatrix}
\boldsymbol{\alpha}^H &
\boldsymbol{\mathfrak{h}}^H
\end{bmatrix}\begin{bmatrix}
\textbf{a}\\ 
\textbf{b}
\end{bmatrix}\right)+F, 
\end{align}
where the vectors $\textbf{a}$ and $\textbf{b}$, the matrices $\textbf{C}$, $\textbf{D}$, and $\textbf{E}$, and the scalar $F$ are given in \eqref{eq:a-b}, \eqref{eq:C-D}, and \eqref{eq:e-F}, respectively.
Equating the first derivative of $f_{\mathcal{H}_1}(\boldsymbol{\alpha},\boldsymbol{\mathfrak{h}})$ with respect to $\boldsymbol{\alpha}^H$ and $\boldsymbol{\mathfrak{h}}^H$ 
to zero, the estimated  $\boldsymbol{\alpha}$ and $\boldsymbol{\mathfrak{h}}$ are obtained as
\begin{align}\label{estimation}
    \begin{bmatrix}
\hat{\boldsymbol{\alpha}}\\ 
\hat{\boldsymbol{\mathfrak{h}}}
\end{bmatrix}=\begin{bmatrix}
 \textbf{C}& \textbf{E} \\ 
 \textbf{E}^H&\textbf{D} 
\end{bmatrix}^{-1}\begin{bmatrix}
\textbf{a}\\ 
\textbf{b}
\end{bmatrix}.
\end{align}
Substituting \eqref{estimation} in \eqref{fH1Matrix}, we have
\begin{align}\label{min_fH1}
   \min_{\boldsymbol{\alpha},\boldsymbol{\mathfrak{h}}} f_{\mathcal{H}_1}(\boldsymbol{\alpha},\boldsymbol{\mathfrak{h}})=&  - \begin{bmatrix}
\textbf{a}^H &
\textbf{b}^H
\end{bmatrix} \begin{bmatrix}
 \textbf{C}& \textbf{E} \\ 
 \textbf{E}^H&\textbf{D} 
\end{bmatrix}^{-1}\begin{bmatrix}
\textbf{a}\\ 
\textbf{b}
\end{bmatrix}+F.
\end{align}
Then, the numerator in \eqref{eq:likelihood-MAPRT} is derived by inserting \eqref{min_fH1} into \eqref{optH1} and given in \eqref{Lambda_num}.


\section*{Appendix B: Proof of Sensing SINR}\label{app:proof_SNR}

The sensing SINR for the received vector signal in \eqref{hypothesis} is given as
\begin{align}\label{gamma_s}
    \mathsf{SINR}_{\rm s}^{\rm isac} &= \frac{\mathbb{E}\left\{\Vert \textbf{g}_{\tau}\Vert^2\right\}}{\mathbb{E}\left\{\Vert \textbf{n}_{\tau}\Vert^2\right\}+\mathbb{E}\left\{\Vert \textbf{h}_{\tau}\Vert^2\right\}} \nonumber\\
    &= \frac{\sum_{m=1}^{\tau_s}\mathbb{E}\left\{\Vert \textbf{G}[m]\boldsymbol{\alpha}\Vert^2\right\}}{\tau_s MN_{\rm rx}\sigma_n^2+\sum_{m=1}^{\tau_s}\mathbb{E}\left\{\Vert \textbf{X}[m]\boldsymbol{\mathfrak{h}}\Vert^2\right\}}.
\end{align}
Using \eqref{y_rPrim}, the expectation $\mathbb{E}\left\{\Vert \textbf{G}[m]\boldsymbol{\alpha}\Vert^2\right\}$ in the numerator  can be computed as

\begin{align}
 &  \mathbb{E}\left\{\Vert \textbf{G}[m]\boldsymbol{\alpha}\Vert^2\right\}  
   = \sum_{r=1}^{N_{\rm rx}}\mathbb{E}\left\{\left\Vert\sum_{k=1}^{N_{\rm tx}}\alpha_{r,k}\textbf{g}_{r,k}[m]\right\Vert^2\right\} 
   \nonumber\\
   &=\sum_{r=1}^{N_{\rm rx}}\sum_{k=1}^{N_{\rm tx}}\sum_{j=1}^{N_{\rm tx}}\mathbb{E}\left\{ \textbf{g}_{r,k}^H[m]\alpha_{r,k}^*\alpha_{r,j}\textbf{g}_{r,j}[m]\right\}\nonumber\\
 & \stackrel{(a)}{=}\sum_{r=1}^{N_{\rm rx}}\sum_{k=1}^{N_{\rm tx}}\sum_{j=1}^{N_{\rm tx}}\textbf{g}_{r,k}^H[m]\textrm{cov}\left( \alpha_{r,j},\alpha_{r,k}\right)\textbf{g}_{r,j}[m]
  \nonumber\\
  &\stackrel{(b)}{=}\boldsymbol{\rho}^T \textbf{D}_{\rm s}^H[m] \Bigg(\sum_{r=1}^{N_{\rm rx}}\sum_{k=1}^{N_{\rm tx}}\sum_{j=1}^{N_{\rm tx}}\sqrt{\beta_{r,k}\beta_{r,j}}\textrm{cov}\left( \alpha_{r,j},\alpha_{r,k}\right)\textbf{W}_k^H\nonumber\\
 &\hspace{3mm}\times\textbf{a}^{*}(\varphi_{k},\vartheta_{k})\textbf{a}^H(\phi_{r},\theta_{r})\textbf{a}(\phi_{r},\theta_{r})\textbf{a}^{T}(\varphi_{j},\vartheta_{j})  \textbf{W}_j\Bigg)\textbf{D}_{\rm s}[m]\boldsymbol{\rho},
\end{align}
where in $(a)$, we treat $\textbf{g}_{r,k}[m]$ as deterministic since the edge cloud knows $\textbf{x}_k[m]$. In $(b)$, we recalled the definition of $\textbf{g}_{r,k}[m]$ from \eqref{y_rPrim} and $\textbf{x}_k[m]$ from \eqref{x_k}. Defining 
\begin{align}
    \textbf{A}=& \sum_{m=1}^{\tau_s}\textbf{D}_{\rm s}^H[m] \Bigg(\sum_{r=1}^{N_{\rm rx}}\sum_{k=1}^{N_{\rm tx}}\sum_{j=1}^{N_{\rm tx}}\sqrt{\beta_{r,k}\beta_{r,j}}\textbf{W}_k^H\textbf{a}^{*}(\varphi_{k},\vartheta_{k})\textbf{a}^H(\phi_{r},\theta_{r})\nonumber\\
 &\times\textrm{cov}\left( \alpha_{r,j},\alpha_{r,k}\right)\textbf{a}(\phi_{r},\theta_{r})\textbf{a}^{T}(\varphi_{j},\vartheta_{j})  \textbf{W}_j\Bigg)\textbf{D}_{\rm s}[m],
 \end{align}
the term $\sum_{m=1}^{\tau_s}\mathbb{E}\left\{\Vert \textbf{G}[m]\boldsymbol{\alpha}\Vert^2\right\}$ in the numerator of the sensing SINR in \eqref{gamma_s} can be written as $\sum_{m=1}^{\tau_s}\mathbb{E}\left\{\Vert \textbf{G}[m]\boldsymbol{\alpha}\Vert^2\right\} = \boldsymbol{\rho}^T\textbf{A}\boldsymbol{\rho}.$


Using \eqref{y_rPrim}, $\mathbb{E}\left\{\Vert \textbf{X}[m]\boldsymbol{\mathfrak{h}}\Vert^2\right\}$ in the denominator of the sensing SINR in \eqref{gamma_s} can be expressed as
\begin{align} \label{ex:Xh-frak}
   \mathbb{E}\left\{\Vert \textbf{X}[m]\boldsymbol{\mathfrak{h}}\Vert^2\right\}&= \sum_{r=1}^{N_{\rm rx}} \mathbb{E}\left\{ \left\Vert \sum_{k=1}^{N_{\rm tx}} \textbf{H}_{r,k}\textbf{x}_k[m]
   \right \Vert^2 \right\} \nonumber\\
   &\stackrel{(a)}{=}\sum_{r=1}^{N_{\rm rx}}\sum_{k=1}^{N_{\rm tx}} \sum_{j=1}^{N_{\rm tx}} \textbf{x}^H_k[m] \mathbb{E}\left\{\textbf{H}_{r,k}^H \textbf{H}_{r,j}\right\} \textbf{x}_j[m] \nonumber \\
    & \stackrel{(b)}{=} \sum_{r=1}^{N_{\rm rx}}\sum_{k=1}^{N_{\rm tx}} \textbf{x}^H_k[m] \mathbb{E}\left\{\textbf{H}_{r,k}^H \textbf{H}_{r,k}\right\} \textbf{x}_k[m] \nonumber\\&\stackrel{(c)}{=}\boldsymbol{ \rho}^T\textbf{D}^H_{\rm s}[m]\! \left(\!\sum_{r=1}^{N_{\rm rx}}\sum_{k=1}^{N_{\rm tx}}\!\textbf{W}_k^H \boldsymbol{\mathfrak{R}}_{r,k}\textbf{W}_k\!\right)\! \textbf{D}_{\rm s}[m]\boldsymbol{ \rho},
\end{align}
where we treat $\textbf{x}_k[m]$ as deterministic in $(a)$ since it is known in the edge cloud. In $(b)$, we have used that zero-mean channels $\textbf{H}_{r,k}$ are independent for different $k$ and in $(c)$ we recalled the definition of $\textbf{x}_k[m]$ from \eqref{x_k} and defined $\boldsymbol{\mathfrak{R}}_{r,k} \triangleq\mathbb{E}\left\{\textbf{H}_{r,k}^H \textbf{H}_{r,k}\right\} $. Using the correlated Rayleigh modeling of the channel $\textbf{H}_{r,k}$ from \eqref{eq:Hrk}, the defined expectation term  $\boldsymbol{\mathfrak{R}}_{r,k}$ can be written as \eqref{eq:R-frak}, where we use the fact that the square root of the positive semi-definite matrix  $\textbf{R}_{{\rm rx},(r,k)}$ is Hermitian symmetric.

\begin{figure*}[!t]
   \begin{align} \label{eq:R-frak}
   \boldsymbol{\mathfrak{R}}_{r,k}
  &= \mathbb{E}\left\{ \left(\textbf{R}^{\frac{1}{2}}_{{\rm tx},(r,k)}\right)^* \textbf{W}^H_{{\rm ch},{(r,k)}}\left(\textbf{R}^{\frac{1}{2}}_{{\rm rx},(r,k)}\right)^H
  \textbf{R}^{\frac{1}{2}}_{{\rm rx},(r,k)} \textbf{W}_{{\rm ch},{(r,k)}}\left(\textbf{R}^{\frac{1}{2}}_{{\rm tx},(r,k)}\right)^T \right\} \nonumber\\
  & = \left(\textbf{R}^{\frac{1}{2}}_{{\rm tx},(r,k)}\right)^*\mathbb{E}\left\{  \textbf{W}^H_{{\rm ch},{(r,k)}}
  \textbf{R}_{{\rm rx},(r,k)} \textbf{W}_{{\rm ch},{(r,k)}} \right\}\left(\textbf{R}^{\frac{1}{2}}_{{\rm tx},(r,k)}\right)^T,
\end{align}
\begin{align} \label{eq:exp-trace}
 &\mathbb{E}\left\{  \left[\textbf{W}_{\mathrm{ch},(r,k)}\right]_{:m}^H\textbf{R}_{{\rm rx},(r,k)}\left[\textbf{W}_{\mathrm{ch},(r,k)}\right]_{:n}  \right\} 
=\!\mathrm{tr}\left(\textbf{R}_{{\rm rx},(r,k)}\mathbb{E}\!\left\{\left[\textbf{W}_{\mathrm{ch},(r,k)}\right]_{:n}\!\left[\textbf{W}_{\mathrm{ch},(r,k)}\right]_{:m}^H\!\right\}\right) 
\end{align}
\begin{align}\label{eq:R_r,k}
    \boldsymbol{\mathfrak{R}}_{r,k} = \mathrm{tr}\left(\textbf{R}_{\mathrm{rx},(r,k)}\right) \left(\textbf{R}^{\frac{1}{2}}_{{\rm tx},(r,k)}\right)^*\left(\textbf{R}^{\frac{1}{2}}_{{\rm tx},(r,k)}\right)^T =\mathrm{tr}\left(\textbf{R}_{\mathrm{rx},(r,k)}\right)\textbf{R}^{T}_{{\rm tx},(r,k)}.
\end{align}
\begin{align}\label{eq:expectation}
     \mathbb{E}\left\{\Vert \textbf{X}[m]\boldsymbol{\mathfrak{h}}\Vert^2\right\}=\boldsymbol{ \rho}^T\textbf{D}^H_{\rm s}[m] \left(\sum_{r=1}^{N_{\rm rx}}\sum_{k=1}^{N_{\rm tx}}\mathrm{tr}\left(\textbf{R}_{\mathrm{rx},(r,k)}\right)\textbf{W}_k^H \textbf{R}^{T}_{{\rm tx},(r,k)} \textbf{W}_k\right) \textbf{D}_{\rm s}[m]\boldsymbol{ \rho}.  
\end{align}
\hrulefill
\end{figure*}
The $(m,n)$th entry of the matrix $\mathbb{E}\left\{  \textbf{W}^H_{{\rm ch},{(r,k)}}
  \textbf{R}_{{\rm rx},(r,k)} \textbf{W}_{{\rm ch},{(r,k)}} \right\}$ is computed as \eqref{eq:exp-trace}, where we have used the trace operation and its cyclic shift property. In the above expression, the expectation $\mathbb{E}\left\{\left[\textbf{W}_{\mathrm{ch},(r,k)}\right]_{:n} \left[\textbf{W}_{\mathrm{ch},(r,k)}\right]_{:m}^H \right\}$ is zero if $m\neq n$. Otherwise, if $m=n$, it is equal to the identity matrix. So, the matrix $\mathbb{E}\left\{  \textbf{W}^H_{{\rm ch},{(r,k)}}
  \textbf{R}_{{\rm rx},(r,k)} \textbf{W}_{{\rm ch},{(r,k)}} \right\}$ in \eqref{eq:R-frak} is $\mathrm{tr}\left(\textbf{R}_{\mathrm{rx},(r,k)}\right)\textbf{I}_M$. 
  Inserting this into \eqref{eq:R-frak}, we obtain \eqref{eq:R_r,k}
.
Then, the expression in \eqref{ex:Xh-frak}  is obtained as \eqref{eq:expectation}.
Now, the second term in the denominator in \eqref{gamma_s} can be expressed as $\boldsymbol{\rho}^T\textbf{B}\boldsymbol{\rho}$, where
\begin{align}
\textbf{B}=\sum_{m=1}^{\tau_s}\textbf{D}_{\rm s}^H[m] \left(\sum_{r=1}^{N_{\rm rx}}\sum_{k=1}^{N_{\rm tx}}\mathrm{tr}\left(\textbf{R}_{\mathrm{rx},(r,k)}\right)\textbf{W}_k^H \textbf{R}^{T}_{{\rm tx},(r,k)}\textbf{W}_k \right)\textbf{D}_{\rm s}[m].
\end{align}
Then, we obtain the sensing SINR as 
\begin{equation}
     \mathsf{SINR}_{\rm s}^{\rm isac}=\frac{\boldsymbol{\rho}^T \textbf{A} \boldsymbol{\rho}}{\tau_s MN_{\rm rx}\sigma_n^2+ \boldsymbol{\rho}^T\textbf{B}\boldsymbol{\rho} }. 
\end{equation}


\section*{Appendix C: Proof of Convexity of $f\left(\boldsymbol{\rho},t\right)$}
\label{appendix:C}
  We want to show that $f\left(\boldsymbol{\rho},t\right)$ is a convex function. First, we derive the gradient vector of function $f\left(\boldsymbol{\rho},t\right)$ as follows
 \begin{align}
     \nabla f\left ( \boldsymbol{\rho},t \right )= \begin{bmatrix}
     \frac{2\textbf{A}_r\boldsymbol{\rho}}{t} \\ -\frac{\boldsymbol{\rho}^T\textbf{A}_r\boldsymbol{\rho}}{t^2}
     \end{bmatrix}.
 \end{align}
The Hessian matrix would be 
\begin{align}
     \textbf{H}\left ( \boldsymbol{\rho},t \right )= \begin{bmatrix}
     \frac{2\textbf{A}_r}{t} &-\frac{2\textbf{A}_r\boldsymbol{\rho}}{t^2}\\ -\frac{2\boldsymbol{\rho}^T\textbf{A}_r}{t^2} & \frac{2\boldsymbol{\rho}^T\textbf{A}_r\boldsymbol{\rho}}{t^3}
     \end{bmatrix}.
 \end{align}
 By showing that the Hessian matrix is positive semi-definite, we prove that the objective function $f\left(\boldsymbol{\rho},t \right)$ is convex. The former is equivalent to showing the non-negativity of $\begin{bmatrix}
 \bar{\boldsymbol{\rho}}^T & \bar{t}
 \end{bmatrix}  \textbf{H}\left ( \boldsymbol{\rho},t \right ) \begin{bmatrix}
 \bar{\boldsymbol{\rho}} \\ \bar{t}
 \end{bmatrix}$ for any selection of $\bar{\boldsymbol{\rho}}$ and $\bar{t}$. This product can be computed in a quadratic form as 
\begin{align}
  & \begin{bmatrix}
 \bar{\boldsymbol{\rho}}^T & \bar{t}
 \end{bmatrix} \! \begin{bmatrix}
     \frac{2\textbf{A}_r}{t} \!&\!-\frac{2\textbf{A}_r\boldsymbol{\rho}}{t^2}\\ -\frac{2\boldsymbol{\rho}^T\textbf{A}_r}{t^2} \!&\! \frac{2\boldsymbol{\rho}^T\textbf{A}_r\boldsymbol{\rho}}{t^3}
     \end{bmatrix}\! \begin{bmatrix}
 \bar{\boldsymbol{\rho}} \\ \bar{t}
 \end{bmatrix} 
 \!=\!2\frac{\left(\bar{\boldsymbol{\rho}}-\frac{\bar{t}}{t}\boldsymbol{\rho}\right)^T\textbf{A}_r\left(\bar{\boldsymbol{\rho}}-\frac{\bar{t}}{t}\boldsymbol{\rho}\right)}{t},
 \end{align}
 which is $\geq 0$. Since $\textbf{A}_r$ is positive semi-definite, the above result is always non-negative.
As a result, we can conclude that the Hessian matrix is positive semi-definite and the objective function is convex.

\bibliographystyle{IEEEtran}
\bibliography{IEEEabrv,refs}
\end{document}